\documentclass[10pt,reqno]{amsart}

\usepackage{amsmath}
\usepackage{amssymb}
\usepackage{amsfonts}
\usepackage{epsfig}
\usepackage[margin=1.15in]{geometry}

\newtheorem{theorem}{Theorem}[section]
\newtheorem{lemma}{Lemma}[section]

\newtheorem{remark}{Remark}[section]

\newtheorem{example}{Example}[section]

\numberwithin{equation}{section}

\begin{document}
\title{Asymptotic Investment Behaviors under a Jump-Diffusion Risk Process}
\author{Tatiana Belkina}
\thanks{Laboratory of Risk Theory, Central Economics and Mathematics Institute  of  the Russian Academy of Sciences and
National Research University Higher School of Economics, Moscow, Russia (TB)}
\author{Shangzhen Luo}
\thanks{Department of Mathematics, University of Northern Iowa, Cedar Falls, IA
50614 USA (SL)}
\thanks{AMS 2010 Subject Classifications. Primary 93E20, 91B28, 91B30, Secondary 49J22, 60G99}
\address{Correspondence Address:
Department of Mathematics, University of Northern Iowa, Cedar Falls, IA
50614-0506 USA}
\email{luos@uni.edu}

\begin{abstract}
We study an optimal investment control problem for an insurance company.
The surplus process follows the Cramer-Lundberg process
with perturbation of a Brownian motion.
The company can invest its surplus into a risk free asset and a Black-Scholes risky asset.
The optimization objective is to minimize the probability of ruin.
We show by new operators that the minimal ruin probability function
is a classical solution to the corresponding HJB equation.
Asymptotic behaviors of the optimal investment control policy
and the minimal ruin probability function are studied 
for low surplus levels with a general claim size distribution.
Some new asymptotic results for large surplus levels
in the case with exponential claim distributions are obtained.
We consider two cases of investment control - unconstrained investment and investment with a limited amount.
\end{abstract}
\maketitle
{\bf Key Words:} {Cramer-Lundberg model, Investment, Asymptotic behavior, Ruin minimization, Brownian perturbation}
\section{Introduction}

Stochastic optimization for insurance business with investment control
 has been studied extensively in recent years.
Under the classical Cramer-Lundberg model (compound Poisson risk process),
 the problem of ruin minimization was first considered in Hipp and Plum \cite{HP}, where the investment control is unconstrained and the investment
amount in the risky asset can be at any level.
More recently, Azcue and Muler \cite{AM} studied the same model
 with a borrowing constraint
that restricts the ratio of borrowed amount to the surplus level to invest in the risky asset.
In Belkina et al. \cite{BHLT}, the authors considered the problem with a restriction that the investment (purchase or short-sell) in the risky asset is allowed only within a limited proportion of the surplus.
In Gaier, Grandits and Schachermayer \cite{GaierGS},
and Hipp and Schmidli \cite{H}, 
for the case with  zero interest and light tailed claims,
 asymptotic behavior of the ruin probability
  was investigated and convergence
of the optimal investment level was proved
 as the surplus tends to infinity.
In Frolova et. al. \cite{kony_FKP}, for the case with exponential claims,
a power function approximation
of the ruin probability was provided under the assumption that
all surplus is invested in the risky asset.
In Gaier and Grandits \cite{GaierG}, \cite{GaierGr}, for
 claims with regularly varying tails, 
and in Grandits \cite{Grandits_r}, 
Schmidli \cite{S11} and Eisenberg \cite{E}, for sub-exponential claims, certain asymptotic properties of the ruin probability and the optimal investment amount
were obtained. Other related research articles worked on more complex control forms with reinsurance and investment.
For example, Schmidli \cite{S1} considered an unconstrained optimal reinsurance-investment control problem under the classical model.
Taksar and Markussen \cite{TM}, Luo \cite{LUO} and Luo et al. \cite{LT} studied the problem under the diffusion approximation model with various investment restrictions.

One common extension of the compound Poisson model
 considers Brownian perturbation, which is known as the jump-diffusion model (See, e.g. \cite{DG}, \cite{GY}, \cite{LIN}, \cite{PG} and \cite{YZ0}).
The surplus process is the sum of the classical risk process and
a Brownian motion.
Under this model, in Zhang and Yang \cite{YZ0},
a general objective function was studied with unconstrained investment control
and numerical methods to compute the optimal investment strategy were discussed.
In Gerber and Yang \cite{GY}, absolute ruin probability was considered.
Laubis and Lin \cite{LL} showed a limiting expression
for the ruin probability function which is a power function
under the assumptions that investment amount is a fixed
fraction of the surplus and that the insurance claims are exponential.
In Lin \cite{LIN}, an exponential
 upper bound for the minimal ruin probability was obtained
and numerical calculations revealing the relationships between the adjustment coefficient and model parameters were conducted.
In a surplus model with stochastic interest rate,
Paulsen and Gjessing \cite{PG} studied the probability
of eventual ruin and the time of ruin without investment control.

This paper studies the problem of ruin probability minimization
with investment control. We consider the perturbed compound Poisson surplus model and assume positive interest rate as in \cite{LL} and \cite{YZ0}.
The surplus can be invested into a risky asset (stock) and a risk-free asset,
where the risky asset price follows a geometric Brownian motion.
We consider two cases of investment.
In the first case, we assume that there is no restriction
on the investment (See, e.g. \cite{HP} and \cite{YZ0}).
That is, the investment amount in the risky asset can be at any level.
Note that short-selling of the risky asset is allowed at any level in this case.
In the second case, we assume that
the investment amount in the risky asset is no more than a fixed level $A$ and short-selling the stock is not allowed.
This restriction is set to reduce leveraging level of the insurer
which was considered in e.g. \cite{LT}.
In both cases, the goal is to minimize the probability of ruin.
The minimal ruin probability function is characterized by an
integro-differential HJB equation as in
\cite{AM}, \cite{HP1}, \cite{HP}, \cite{S2} and \cite{YZ0}.
The HJB equation has a classical solution and
it can be shown that the minimal ruin probability function
is proportional to the solution by a verification result.

We summarize three main contributions of this paper.
First, we define novel operators to prove the existence
of a classical solution to the HJB equation in the two investment cases.
These operators provide an alternative method to compute the optimal
investment strategy and the minimal ruin probability
(See numerical examples in the last section).
Second, we give asymptotic results on the optimal investment strategy
and the minimal ruin function for low surplus levels.
In the unconstrained case, we find that
when the surplus level approaches to $0$, the optimal
investment amount tends to a fixed non-zero value, in contrast to
the model without perturbation in \cite{E} and \cite{HP},
where the optimal investment level tends to zero as surplus tends to $0$.
In addition, we find the rate at which the optimal investment amount
converges to the non-zero level.
Asymptotic results near $0$ for the minimal ruin function are also studied.
In the constrained case, we find close interplay between
the model parameters and the optimal investment control.
We give parameter conditions under which the
optimal investment amount takes values $0$, $A$,
or a certain level in between respectively when the surplus is low.
Note that all these asymptotic results are obtained
for an arbitrary claim size distribution.
Third, in the special case with an exponential claim distribution,
we prove some new asymptotic results for large surplus levels.
We show that the optimal investment
amount has a finite limit when surplus tends to infinity.
We also show that the minimal ruin probability function
has a limiting expression that is the product of
an exponential function and a power function
as surplus tends to infinity.
We note that these new limiting results (of the optimal
investment problem with exponential claims) 
hold also in the classical model without perturbation.   

The rest of this paper is organized as follows.
In Section 2, we formulate the optimization problem.
In Section 3, we prove existence of the classical optimal solution
 and give asymptotic results in the constrained case.
 In Section 4, we study the unconstrained case.
In Section 5, we investigate the model when the claim size is exponential.
Numerical examples and concluding remarks are given in Section 6.

\section{The Optimization Problem}
We assume that without investment the surplus of the insurance company is
governed by the Cramer-Lundberg model:
$$X_t=x + ct-\underset{i=1}{\overset{N(t)}{\sum}}Y_i,$$
where $x$ is the initial surplus,
 $c$ is the premium rate, $N(t)$ is a Poisson process with constant intensity $\lambda$, $Y_i$'s are positive i.i.d. random claims.
Suppose that at time $t$, the insurance company invests an amount of $a_t$ to
a risky asset whose price follows a geometric Brownian motion
$$dS_t=\mu S_tdt+\sigma S_tdB_t,$$
where $\mu$ is the stock return rate, $\sigma$ is the volatility, and $B:=\{B_t\}_{t\geq 0}$
is a standard Brownian motion independent of $\{N(t)\}_{t\geq 0}$ and $Y_i$'s.
The rest of the surplus of amount $(X_t-a_t)$
is invested in a risk free asset which evolves as
$$dP_t=rP_tdt,$$
where $r$ is the interest rate.
We also assume that the surplus process is perturbed by a Brownian noise term.
With perturbation and dynamic investment control, denoted by $\pi:=\{a_s\}_{s\geq 0}$, the surplus process is governed by
\begin{equation}\label{dyn}
\begin{split}
&X^\pi_t=x+\int_0^t[c+r(X_s^\pi-a_s)+\mu a_s]ds+\sigma \int_0^t a_sdB_s + \sigma_1\int_0^t dB^1_s
-\underset{i=1}{\overset{N(t)}{\sum}}Y_i,
\end{split}
\end{equation}
where $B^1:=\{B^1_s\}_{s\geq 0}$ is a standard Brownian motion with $E(dB_sdB^1_s)=\rho ds$ for
some correlation $\rho$ and $\sigma_1$ is the perturbation volatility.
We assume that all the random variables are defined in a complete probability
space $(\Omega, \mathcal{F}, P)$ endowed with a filtration $\{\mathcal{F}_t\}_{t\geq 0}$
generated by processes $\{X_t\}_{t\geq 0}$ and $\{S_t\}_{t\geq 0}$.
A control policy $\pi$ is said to be {\it admissible}
if $a_t$ satisfies the following conditions: (i) $a_t$ is $\mathcal{F}_t$ predictable, (ii) $a_t\in\mathcal{A}$, and (iii) $a_t$ is
square integrable over any finite time interval almost surely,
 where $\mathcal{A}=[0,A]$ for a fixed $A>0$ (constrained case)
 or $\mathcal{A}=(-\infty,\infty)$ (unconstrained case).
We denote by $\Pi$ the set of all admissible controls.
In this paper, we make the following assumptions:
(i) the exogenous parameters $A$, $c$, $r$, $\mu$, $\sigma$, $\sigma_1$ are positive constants;
(ii) $|\rho|\neq 1$ (imperfect correlation);
and (iii) the claim distribution function $F$ has a continuous density
with support $(0,\infty)$ and a finite mean.

Now we define ruin time of the insurance company under admissible policy
$\pi$ as the following
\begin{eqnarray}
\label{ruintime}
\tau^\pi=\inf\{t\geq 0: X^\pi_t<0\}.
\end{eqnarray}
Thus the survival probability under policy $\pi$ is
\begin{eqnarray}
\label{svprob}
\delta^\pi(x)=1-P(\tau^\pi<\infty),
\end{eqnarray}
and the maximal survival probability is
\begin{eqnarray}
\label{opt-svp}
\delta(x)=\underset{\pi\in\Pi}{\sup}\ \delta^\pi(x).
\end{eqnarray}
We see $\delta(x)$ must be a non-decreasing function by its definition.
If we assume that $\delta$ is twice continuously differentiable, then
it solves the following Hamilton-Jacobi-Bellman equation:
\begin{eqnarray}
\label{HJB}
\underset{a\in\mathcal{A}}{\sup}\mathcal{L}^{(a)}\delta(x)=0,
\end{eqnarray}
where the operator $\mathcal{L}$ is given by
\begin{equation}
\label{LM}
\mathcal{L}^{(a)}\delta(x)=\frac{1}{2}(\sigma^2a^2+2\rho\sigma\sigma_1a+\sigma_1^2)\delta''(x)
+[c+(\mu-r)a+rx]\delta'(x)-M(\delta)(x),
\end{equation}
with
\begin{equation}
\label{M}
M(\delta)(x)=\lambda[\delta(x)-\int_0^x\delta(x-s)dF(s)].
\end{equation}
We note that $M(\delta)(x)$ is positive given that
$\delta$ is an increasing function on $(0,\infty)$.
We also note an initial condition $$\delta(0)=0,$$ due to Brownian perturbation.

\section{Asymptotic investment at low surplus - the constrained case}
In this section, we study the optimal control problem in the case when investment control is restricted with the control region $\mathcal{A}=[0,A]$.
We assume $\mu>r$ in this section and omit the other case $\mu\leq r$
which can be treated similarly.

Suppose $W$ is a twice continuously differentiable function and $W$ solves the HJB equation. Write
\begin{equation}
\label{aw}
a_W(x):=-\frac{(\mu-r)W'(x)}{\sigma^2W''(x)}-\rho\frac{\sigma_1}{\sigma},
\end{equation}
when $W''(x)\neq 0$. Then $a_W(x)$ is the unconstrained maximizer
of the right side of HJB equation~\eqref{HJB} when $\mathcal{A}=(-\infty,\infty)$ and if $W''(x)<0$.
The constrained maximizer when $\mathcal{A}=[0,A]$ is given by
\begin{eqnarray}\label{astar-1}
a^*_W(x)=\begin{cases} a_W(x) &\ 0<a_W(x)<A, W''(x)<0 \\
0 &\ a_W(x)\leq 0, W''(x)<0;\ or\ a_W(x)\ge A/2, W''(x)>0\\
A &\ a_W(x)\ge A, W''(x)<0;\ or\ a_W(x) < A/2, W''(x)>0
\end{cases},
\end{eqnarray}
for $W''(x)\neq 0$, and $a^*_W(x)=A$ for $W''(x)=0$.

Now define an operator
\begin{eqnarray}
\label{T}
Tw(x)=\underset{0\leq a\leq A}{\inf}T_w(a,x),
\end{eqnarray}
where
\begin{eqnarray}
\label{Ta}
T_w(a,x)=\frac{2\{M(W)(x)-[c+rx+(\mu-r)a]w(x)\}}{\sigma^2a^2+2\rho\sigma\sigma_1a+\sigma_1^2},
\end{eqnarray}
$w$ is a non-negative and continuous function on $[0,\infty)$, and
$W$ is defined by
\begin{eqnarray}
\label{W}
W(x)=\int_0^xw(s)ds.
\end{eqnarray}
We note $W(0)=0$. Below we give a result that shows existence
of a classical solution to the HJB equation~\eqref{HJB}.
We prove an integral-operator version of
the Picard-Lindel{\" o}f theorem on the initial value problem.
See \cite{T} for the functional version of the arguments.
\begin{lemma}\label{lemmavTv}
There exists a continuously differentiable function $v(x)$ on $[0,\infty)$ satisfying
\begin{equation}\label{vTv}
v'(x)=Tv(x),\ v(0)=1.
\end{equation}
\end{lemma}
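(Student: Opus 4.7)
The plan is to recast the initial value problem as the fixed-point equation
\[v(x) = 1 + \int_0^x Tv(s)\,ds,\]
which, for continuous $v$, is equivalent to $v' = Tv$ with $v(0)=1$ once continuity of $Tv$ in $x$ is established. Define $\Phi w(x) = 1+\int_0^x Tw(s)\,ds$. I would apply the Banach fixed-point theorem to $\Phi$ on a closed ball of nonnegative continuous functions near the constant $1$ in $C([0,h])$ for sufficiently small $h>0$, and then extend the resulting local solution to $[0,\infty)$ by a continuation argument.

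The technical backbone consists of two ingredients. First, the assumption $|\rho|\neq 1$ forces $|\rho|<1$ in the relevant regime, so that the quadratic form $\sigma^2 a^2+2\rho\sigma\sigma_1 a+\sigma_1^2$ is bounded below by a positive constant uniformly in $a\in[0,A]$. Hence $T_w(a,x)$ is continuous in $(a,x)$, the infimum in $a\in[0,A]$ is attained, and $Tw(x)$ is continuous in $x$. Second, a direct calculation using $|W_1(x)-W_2(x)|\le x\|w_1-w_2\|_\infty$ and the pointwise estimate $|M(W_1-W_2)(x)|\le 2\lambda x\|w_1-w_2\|_\infty$, together with the boundedness of $c+rx+(\mu-r)a$ on $[0,h]\times[0,A]$ and the elementary inequality $|\inf_a f_1-\inf_a f_2|\le \sup_a|f_1-f_2|$, gives a Lipschitz estimate
\[|Tw_1(x)-Tw_2(x)|\le C_h\,\|w_1-w_2\|_\infty,\qquad x\in[0,h],\]
on the chosen ball, with $C_h$ depending only on $h$, $A$, and the model parameters.

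Once these estimates are in hand, shrinking $h$ if necessary makes $\Phi$ map the ball into itself (because $\|\Phi w-1\|_\infty\le h\sup|Tw|$) and makes it a strict contraction (because $\|\Phi w_1-\Phi w_2\|_\infty\le C_h h\|w_1-w_2\|_\infty$), so the Banach fixed-point theorem produces a unique continuous solution on $[0,h]$; continuity of $Tv$ then upgrades $v$ to $C^1([0,h])$. The main obstacle is the global extension. After the local step, I would iterate the construction from $(x_0, v(x_0))$ on successive intervals, but this produces a solution on all of $[0,\infty)$ only if $v$ stays bounded (so that the local step length can be chosen uniformly) and nonnegative (so that $T$ remains applicable). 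Non-blowup follows from a Gr\"onwall-type argument: the numerator of $T_w$ depends linearly on $w$ and the denominator is uniformly bounded below, so $|v(x)|$ is controlled on every compact $[0,X]$. Maintaining nonnegativity is handled by the same uniform control combined with the smallness of the local step, ensuring $v$ stays close to its value at the left endpoint of each interval; patching the pieces yields the desired $C^1$ solution on $[0,\infty)$.
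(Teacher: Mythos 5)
Your proposal follows essentially the same route as the paper: rewrite the initial value problem as the Volterra fixed-point equation $v(x)=1+\int_0^x Tv(s)\,ds$, verify that $Tw(x)$ is continuous in $x$, prove a Lipschitz estimate for $T$ in the supremum norm using the uniform positive lower bound of $\sigma^2a^2+2\rho\sigma\sigma_1a+\sigma_1^2$ on $[0,A]$ (guaranteed by $|\rho|\neq1$) together with $\bigl|\inf_a f_1-\inf_a f_2\bigr|\le\sup_a|f_1-f_2|$, apply the Banach fixed-point theorem on a short interval, and continue the solution stepwise to $[0,\infty)$. The only substantive difference is that the paper exploits the fact that the numerator of $T_w(a,x)$ is affine in $w$ (through $M(W)$ and $w(x)$) while the denominator is bounded below, so $T$ is globally Lipschitz on all of $C[0,K]$; the contraction is therefore run on the whole space, no invariant ball, no self-map estimate, and no nonnegativity of the iterates is needed, and positivity of $v$ is established afterwards as a separate lemma by a contradiction argument at the first zero of $v$.

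This difference matters at exactly one point of your write-up: the claim that nonnegativity is preserved in the continuation step because "$v$ stays close to its value at the left endpoint of each interval" does not hold up as stated — closeness to $v(x_0)$ does not prevent a sign change when $v(x_0)$ is small, and you have no a priori positive lower bound on $v$ at that stage. The repair is immediate and is what the paper does: drop the restriction to nonnegative functions (your Lipschitz estimate never used it), obtain the solution on each $[0,K]$, $[0,2K]$, \dots by contraction on the full space, and prove positivity of $v$ separately afterwards. With that adjustment your argument is correct and matches the paper's proof.
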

\begin{proof}
Notice for any $w\in C[0,K]$ for any $K>0$,
the functions $M(W)(x)$ and $T_w(a,x)$ are continuous in $x$.
Function $T_w(a,x)$ is continuous in $x$ uniformly for $a\in [0,A]$.
Thus function $Tw(x)$ is continuous in $x$.

Now we consider two continuous functions $w_1(x)$ and
$w_2(x)$ in $C[0,K]$.
Use the supremum norm
$$||w||=\sup\{|w(x)|:0 \le x\le K\},$$
for any $w\in C[0,K]$.
Then the following inequalities hold
$$|W_1(x)-W_2(x)|\le \int_0^x |w_1(y)-w_2(y)|dy\le K ||w_1-w_2||,$$
$$\left| \int_0^{x}W_1(x-y)-W_2(x-y)dF(y) \right|\le K||w_1-w_2||\int_0^xdF(y)\le K||w_1-w_2||,$$
where $W_i(x)=\int_0^xw_i(y)dy$ for $i=1,\ 2$.
So we get
$$|M(W_1)(x)-M(W_2)(x)|\leq 2\lambda K||w_1-w_2||.$$
Further, for any $x\in [0,K]$, suppose $Tw_1(x)\geq Tw_2(x)$ and $Tw_2(x)=T_{w_2}(a_2^*,x)$, then
we have
\begin{equation}
\begin{split}
&Tw_1(x)-Tw_2(x)\\
 =&Tw_1(x)-T_{w_2}(a_2^*,x)\\
\leq&T_{w_1}(a_2^*,x)-T_{w_2}(a_2^*,x)\\
=&\frac{2\{M(W_1)(x)-M(W_2)(x)-[c+rx+(\mu-r)a_2^*][w_1(x)-w_2(x)]\}}
{\sigma^2(a_2^*)^2+2\rho\sigma\sigma_1a_2^*+\sigma_1^2}\\
\leq&C(K)||w_1-w_2||,
\end{split}
\end{equation}
where
\begin{equation}
C(K)=2[2\lambda K+c+rK+(\mu-r)A]/\sigma_1^2.
\end{equation}
Thus we see the operator $T$
is Lipschitz with respect to the supremum norm on $C[0,K]$:
\begin{equation}
||Tw_1-Tw_2||\leq C(K)||w_1-w_2||,
\end{equation}
 with Lipschitz constant $C(K)$. Now define operator
\begin{equation}\label{mT}
\mathcal{T}w(x)=\int_0^xTw(y)dy+1,
\end{equation}
for $w\in C[0, K]$. Then it holds
$$||\mathcal{T}w_1-\mathcal{T}w_2||\le KC(K)||w_1-w_2||<1/2||w_1-w_2||,$$
if we select a small $K$ such that $KC(K)<1/2$.
This shows the operator $\mathcal{T}$ is a contraction
on $C[0,K]$. Thus there exists a fixed point $v$ in $C[0,K]$
such that
\begin{equation}\label{vmTv}
v=\mathcal{T}v.
\end{equation}
The solution can be extended to $[0,2K]$ by the same proof and
hence to $[0,\infty)$.
Differentiate \eqref{vmTv} and the lemma is proved.
\end{proof}

Write
\begin{equation}
V(x)=\int_0^xv(s)ds,
\end{equation}
where $v$ is the solution in Lemma~\ref{lemmavTv}. We prove:
\begin{lemma}
$v(x)$ is positive on $[0,\infty)$.
\end{lemma}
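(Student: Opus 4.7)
The plan is to argue by contradiction, using the initial condition $v(0)=1$ and the explicit form of the operator $T$ at a putative first zero of $v$. Suppose $v$ were not positive on $[0,\infty)$, and let $x_0=\inf\{x\geq 0:v(x)\leq 0\}$. Since $v$ is continuous with $v(0)=1$, one has $x_0>0$, $v(x_0)=0$, and $v(x)>0$ for all $x\in[0,x_0)$. Differentiability of $v$ (from Lemma~\ref{lemmavTv}) forces $v'(x_0)\le 0$, because $v$ transitions from strictly positive values to $0$ at $x_0$. The goal is to contradict this by showing $Tv(x_0)>0$.

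The key observation is that at $x=x_0$ the formula \eqref{Ta} simplifies dramatically, because the term $[c+rx+(\mu-r)a]v(x)$ vanishes. So
\begin{equation*}
T_v(a,x_0)=\frac{2M(V)(x_0)}{\sigma^2a^2+2\rho\sigma\sigma_1 a+\sigma_1^2},
\end{equation*}
which is independent of $a$ in the numerator. I would next verify that $M(V)(x_0)>0$: since $v>0$ on $[0,x_0)$, the primitive $V$ is strictly increasing on $[0,x_0]$ with $V(x_0)>0$, so $V(x_0-s)<V(x_0)$ for $s\in(0,x_0]$, and combining this with $F(x_0)<1$ (which holds because $F$ has support $(0,\infty)$) yields
\begin{equation*}
M(V)(x_0)=\lambda\Bigl[V(x_0)-\int_0^{x_0}V(x_0-s)dF(s)\Bigr]\geq\lambda V(x_0)(1-F(x_0))>0.
\end{equation*}

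Then, to conclude $Tv(x_0)>0$, I only need the denominator $\sigma^2a^2+2\rho\sigma\sigma_1 a+\sigma_1^2$ to be bounded above on the compact set $[0,A]$. Since $|\rho|<1$ this quadratic is everywhere strictly positive and continuous, so it attains a finite maximum $D$ on $[0,A]$, and
\begin{equation*}
Tv(x_0)=\inf_{0\leq a\leq A}T_v(a,x_0)\geq\frac{2M(V)(x_0)}{D}>0.
\end{equation*}
This contradicts $v'(x_0)=Tv(x_0)\leq 0$ obtained above, so no such $x_0$ exists and $v$ remains positive throughout $[0,\infty)$.

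The only delicate point is step three, the strict positivity of $M(V)(x_0)$; it hinges on using the support assumption on $F$ (to ensure $F(x_0)<1$) together with the fact that $v$ is genuinely positive to the left of $x_0$, so $V(x_0)>0$. Everything else is just plugging $v(x_0)=0$ into \eqref{Ta} and exploiting that the denominator never approaches zero on $[0,A]$ thanks to the imperfect correlation assumption $|\rho|\neq 1$.
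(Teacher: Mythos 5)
Your proof is correct and follows essentially the same route as the paper: argue by contradiction at the first zero $x_0$, note that $v(x_0)=0$ kills the term $[c+rx+(\mu-r)a]v(x)$ so that $Tv(x_0)=\inf_{0\le a\le A}2M(V)(x_0)/(\sigma^2a^2+2\rho\sigma\sigma_1a+\sigma_1^2)>0$, contradicting $v'(x_0)\le 0$ from the left difference quotient. The only difference is that you spell out the positivity of $M(V)(x_0)$ and the boundedness of the denominator, details the paper leaves implicit.
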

\begin{proof}
The lemma can be proved by contradiction. Define
$$x_0=\inf\{x\ge 0: v(x)=0\},$$ and suppose $x_0<\infty$. Then it holds $v(x_0)=0$.
Thus
$$v'(x_0)=Tv(x_0)=
\inf_{0\le a\le A}2M(V)(x_0)/(a^2\sigma^2+2\rho\sigma\sigma_1a+\sigma_1^2)>0,$$
which contradicts:
$$v'(x_0)=\lim_{\epsilon\to 0+}\frac{v(x_0)-v(x_0-\epsilon)}{\epsilon}\le 0.$$
\end{proof}
Now we give a verification lemma:
\begin{lemma}\label{VER}
Suppose $g$ is a positive, increasing, and
twice continuously differentiable function on $[0,\infty)$
and solves the HJB equation~\eqref{HJB}. Then $g$ is bounded and
the maximal survival probability function is given by $\delta(x)=g(x)/g(\infty)$.
Moreover, the associated optimal investment strategy is $\pi^*=\{a^*(t)\}_{t\geq 0}$,
where $a^*(t)=a_{\delta}^*(X^{\pi^*}_{t-})$, and $X^{\pi^*}_t$ is the surplus at time $t$ under
the control policy $\pi^*$.
\end{lemma}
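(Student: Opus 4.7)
The plan is a classical verification argument via It\^{o}'s formula for jump-diffusions: I would exploit the HJB equation~\eqref{HJB} to show that $g(X^\pi_{\cdot \wedge \tau^\pi})$ is a non-negative supermartingale under every admissible $\pi$, and a martingale under the candidate optimum $\pi^*$. Extend $g$ by $g(x) = 0$ for $x \le 0$, which is consistent with the integration range in~\eqref{M} and assigns the correct value to $g(X^\pi_{\tau^\pi})$ whenever ruin occurs by a downward claim jump. For any admissible $\pi$, applying It\^{o}'s formula to $g(X^\pi_\cdot)$ yields a drift equal to $\mathcal{L}^{(a_s)}g(X^\pi_s) \le 0$ by~\eqref{HJB}; the remaining terms (the two Brownian integrals plus the compensated compound-Poisson integral) are local martingales. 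Localizing by $T_n := \inf\{t : X^\pi_t \ge n\}$ and stopping at $\tau^\pi$ gives a true supermartingale, and non-negativity of $g$ combined with Fatou's lemma removes the localization, so $E[g(X^\pi_{t\wedge\tau^\pi})] \le g(x)$ for all $t$.

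The first key step is to prove $g(\infty) < \infty$. I would apply the supermartingale inequality to the trivial policy $\pi_0 \equiv 0$: by standard Cram\'{e}r--Lundberg-with-perturbation arguments augmented by the positive interest term $r\int_0^t X_s\, ds$, one has $P(\tau^{\pi_0} = \infty) > 0$ for every $x\ge 0$, and on the survival set $X^{\pi_0}_t \to \infty$ almost surely because the interest dominates the linear premium and the worst-case claim drift once the surplus is large. Fatou applied to $g(X^{\pi_0}_{t\wedge\tau^{\pi_0}}) \to g(\infty)\mathbf{1}_{\{\tau^{\pi_0}=\infty\}}$ then gives $g(\infty)P(\tau^{\pi_0}=\infty) \le g(x)$, so $g$ is bounded.

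With $g$ bounded I can pass to the limit $t\to\infty$ in the supermartingale inequality for arbitrary $\pi$: on $\{\tau^\pi \le t\}$ one has $g(X^\pi_{\tau^\pi}) = 0$, while on $\{\tau^\pi = \infty\}$ the same transience argument gives $X^\pi_t \to \infty$ and hence $g(X^\pi_t) \to g(\infty)$. Bounded convergence then yields $g(\infty)\delta^\pi(x) \le g(x)$. For the reverse inequality I observe that $a^*_W$ in~\eqref{astar-1} depends only on ratios of derivatives of $W$, so $a^*_\delta \equiv a^*_g$. Under the feedback $\pi^*$ with $a^*(t) = a^*_g(X^{\pi^*}_{t-})$, the supremum in~\eqref{HJB} is attained, the drift in the It\^{o} expansion vanishes, and the stopped process becomes a bounded local martingale, hence a true martingale. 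The resulting equality $E[g(X^{\pi^*}_{t\wedge\tau^{\pi^*}})] = g(x)$ passes to the limit as above, giving $g(\infty)\delta^{\pi^*}(x) = g(x)$; combined with the upper bound this proves $\delta(x) = g(x)/g(\infty)$ and identifies $\pi^*$ as optimal.

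The main obstacles I anticipate are: (i) the non-circular proof of $g(\infty) < \infty$, which I resolve via the zero-investment comparison above rather than via a direct estimate on $g$; (ii) the almost-sure transience $X^\pi_t \to \infty$ on $\{\tau^\pi = \infty\}$ for an arbitrary $\pi$, which needs a careful pathwise lower bound exploiting $r > 0$ (and possibly an additional decomposition when the investment amount causes large diffusive fluctuations); and (iii) admissibility and strong-solution existence for the feedback policy $\pi^*$, which should follow from the smoothness of $g$ supplied by Lemma~\ref{lemmavTv} together with the explicit piecewise form of $a^*_g$ in~\eqref{astar-1}, giving coefficients that are locally Lipschitz on the regions where $g''$ does not vanish.
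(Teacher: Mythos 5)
Your proposal follows the same route the paper itself appeals to: the paper omits the proof of Lemma~\ref{VER} and refers to the verification result of Belkina, Hipp, Luo and Taksar, which is precisely the classical It\^{o}/supermartingale argument you outline (extend $g$ by zero below the ruin level consistently with \eqref{M}, get $E[g(X^\pi_{t\wedge\tau^\pi})]\le g(x)$ from $\sup_a\mathcal{L}^{(a)}g\le 0$, boundedness of $g$ from a comparison policy with positive survival probability, and a martingale identity under the feedback maximizer). So the architecture is right and matches the intended proof.

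There is, however, one genuine gap, exactly at your anticipated obstacle (ii), and the fix you propose there would not work. To go from the supermartingale inequality to $g(\infty)\,\delta^\pi(x)\le g(x)$ you need $g(X^\pi_t)\to g(\infty)$ a.s. on $\{\tau^\pi=\infty\}$ for an \emph{arbitrary} admissible $\pi$, and a ``pathwise lower bound exploiting $r>0$'' cannot deliver this: in the unconstrained case $a_t$ is only predictable and locally square integrable, so the investment-induced drift and volatility can dominate the interest term along any path, and there is no comparison argument forcing $X^\pi_t\to\infty$. The standard repair is of a different nature: $g(X^\pi_{t\wedge\tau^\pi})$ is a bounded nonnegative supermartingale, hence converges a.s.; if on a set of positive probability one had $\tau^\pi=\infty$ and a limit strictly below $g(\infty)$, then $X^\pi_t$ would eventually stay below some finite level $L$, and since the claim density has support $(0,\infty)$, almost surely infinitely many claims exceed $L$, any one of which ruins the process from a surplus below $L$ --- a contradiction. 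This uses the unbounded claim support (or, alternatively, a Borel--Cantelli argument), not the interest rate; the same device, rather than drift domination, is also the cleanest way to handle the zero-investment policy you use to prove $g(\infty)<\infty$. Two smaller points: for the equality under $\pi^*$ you also need $g(X^{\pi^*}_{\tau^{\pi^*}})=0$ when ruin occurs through the diffusion reaching $0$, i.e.\ you need $g(0)=0$ (true for the intended $g=V$, but not a consequence of setting $g\equiv 0$ on the negative axis); and in the constrained case the maximizer \eqref{astar-1} may be discontinuous in $x$, so the feedback coefficients need not be locally Lipschitz and existence of $X^{\pi^*}$ should be argued via the nondegeneracy $\sigma^2a^2+2\rho\sigma\sigma_1 a+\sigma_1^2\ge\sigma_1^2(1-\rho^2)>0$ (weak solution/martingale problem) rather than the Picard-type argument you sketch.
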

The proof of Lemma~\ref{VER} is similar to the proof of the verification result
in \cite{BHLT} and we omit it. It is easy to check that $V$ solves the HJB equation. Thus by Lemma~\ref{VER}, we obtain:
\begin{theorem}
Function V is bounded and increasing on $[0,\infty)$ and
the maximal survival probability function is given by $$\delta(x)=V(x)/V(\infty).$$
\end{theorem}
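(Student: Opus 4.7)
The plan is to apply the verification Lemma~\ref{VER} to the function $V$ constructed just before the theorem. Once I check that $V$ is positive, increasing, twice continuously differentiable, and solves the HJB equation~\eqref{HJB}, the conclusions of the theorem (boundedness of $V$ and $\delta(x)=V(x)/V(\infty)$) are exactly the output of Lemma~\ref{VER}.

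Smoothness and monotonicity come essentially for free from what has already been done. By construction $V'=v$, and $v\in C^1[0,\infty)$ with $v'=Tv$ by Lemma~\ref{lemmavTv}, so $V\in C^2[0,\infty)$. The positivity lemma just above gives $v>0$ on $[0,\infty)$, hence $V$ is strictly increasing; combined with the initial condition $V(0)=0$ this yields $V(x)>0$ for every $x>0$.

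The substantive step is verifying the HJB equation. First I would observe that under the standing assumption $|\rho|\ne 1$ (with $|\rho|\le 1$ as a correlation) the quadratic $\sigma^2 a^2+2\rho\sigma\sigma_1 a+\sigma_1^2$ is strictly positive for every real $a$, since its discriminant $4\sigma^2\sigma_1^2(\rho^2-1)$ is negative. Now substitute $V'=v$ and $V''=Tv$ into $\mathcal{L}^{(a)}V(x)$: the defining inequality $Tv(x)\le T_v(a,x)$, multiplied through by the positive quadratic, rearranges directly (using \eqref{LM} and \eqref{Ta}) to
\begin{equation*}
\mathcal{L}^{(a)}V(x)=\tfrac{1}{2}(\sigma^2 a^2+2\rho\sigma\sigma_1 a+\sigma_1^2)V''(x)+[c+rx+(\mu-r)a]v(x)-M(V)(x)\le 0
\end{equation*}
for every $a\in[0,A]$. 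Because $T_v(\cdot,x)$ is continuous on the compact interval $[0,A]$, the infimum in the definition of $Tv(x)$ is attained at some $a^*=a^*(x)$, where the above inequality is an equality. Hence $\sup_{a\in[0,A]}\mathcal{L}^{(a)}V(x)=0$ as required.

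With all hypotheses of Lemma~\ref{VER} verified, the theorem follows immediately. The only point requiring any thought is the positivity of the diffusion coefficient, which depends on the imperfect-correlation assumption $|\rho|<1$; everything else is direct bookkeeping on top of the machinery already in place, so I do not anticipate a serious obstacle here.
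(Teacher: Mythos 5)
Your proposal is correct and follows exactly the route the paper takes: the paper simply asserts that ``it is easy to check that $V$ solves the HJB equation'' and then invokes Lemma~\ref{VER}, while you spell out that check (positivity of the quadratic diffusion coefficient under $|\rho|<1$, the inequality $\mathcal{L}^{(a)}V\le 0$ from $V''=Tv\le T_v(a,\cdot)$, and attainment of the infimum on the compact set $[0,A]$). No gap; this is the intended argument with the omitted bookkeeping filled in.
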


\begin{lemma}\label{v'-} There exists $\epsilon>0$ such that $v'(x)<0$ for $0\leq x<\epsilon$.
\end{lemma}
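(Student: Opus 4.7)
The plan is to evaluate $Tv$ directly at $x=0$ and show it is strictly negative, then invoke continuity.

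First I would use the initial data. By Lemma \ref{lemmavTv}, $v(0)=1$, and by definition $V(0)=\int_0^0 v(s)\,ds = 0$, so
$$M(V)(0) = \lambda\bigl[V(0) - \textstyle\int_0^0 V(-s)\,dF(s)\bigr] = 0.$$
Plugging these into the definition of $T_v(a,x)$ yields
$$T_v(a,0) = \frac{-2\bigl[c + (\mu-r)a\bigr]}{\sigma^2 a^2 + 2\rho\sigma\sigma_1 a + \sigma_1^2}.$$
Since $c>0$ and $\mu > r$ (the standing assumption of this section) and $a\in[0,A]$, the numerator is strictly negative, bounded above by $-2c<0$. For the denominator, the assumption $|\rho|\neq 1$ gives
$$\sigma^2 a^2 + 2\rho\sigma\sigma_1 a + \sigma_1^2 = (\sigma a + \rho\sigma_1)^2 + (1-\rho^2)\sigma_1^2 \geq (1-\rho^2)\sigma_1^2 > 0,$$
and the denominator is continuous and therefore bounded above on $[0,A]$. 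Hence $T_v(a,0)$ is bounded above by a strictly negative constant uniformly in $a\in[0,A]$, so
$$v'(0) = Tv(0) = \inf_{0\le a\le A} T_v(a,0) < 0.$$

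Next I would invoke continuity. The argument in the proof of Lemma \ref{lemmavTv} shows that $Tw(x)$ is continuous in $x$ for any $w\in C[0,K]$; applying this to $w=v$, we see $v'(x) = Tv(x)$ is continuous at $0$. Since $v'(0)<0$, continuity yields some $\epsilon>0$ with $v'(x)<0$ for $0\le x < \epsilon$, as required.

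There is essentially no hard step here; the whole argument rests on the observation that the jump/income drift term $-[c+(\mu-r)a]v(0)$ dominates at the origin because $M(V)(0)$ vanishes. The only detail to watch is justifying strict positivity of the denominator uniformly in $a$, which is exactly where the imperfect-correlation hypothesis $|\rho|\neq 1$ enters.
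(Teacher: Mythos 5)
Your argument is correct and is exactly the paper's (very terse) proof made explicit: the paper simply says the lemma follows by letting $x\to 0$ in \eqref{vTv}, which amounts to your computation that $M(V)(0)=0$ and $v(0)=1$ force $Tv(0)=\inf_{0\le a\le A}\frac{-2[c+(\mu-r)a]}{\sigma^2a^2+2\rho\sigma\sigma_1a+\sigma_1^2}<0$, followed by continuity of $Tv$. Your added care about the uniform positivity of the denominator via $|\rho|\neq 1$ is a welcome detail but not a departure from the paper's route.
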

The lemma can be shown by letting $x\rightarrow 0$ in \eqref{vTv}.
The lemma shows local concavity of $V$ at low surplus levels.
In general, concavity of $V$ is not clear in the constrained case.
However, there always exits a concave solution of the HJB equation on $(0,\infty)$ in the unconstrained case (See Section 4).

Write
$$\rho_1=\frac{(\mu-r)\sigma_1}{2c\sigma},\ \rho_2=\rho_1-\frac{(\mu-r)A^2+2cA}{2 c\sigma_1/\sigma}.$$

\begin{theorem}\label{th1}
There exists $\epsilon>0$ such that:
\begin{itemize}
\item[(i)]{if $\rho<\rho_2$, then
$V$ solves
\begin{equation}\label{HJBA}
\sup_{0\le a\le A}\mathcal{L}^aV(x)=\mathcal{L}^AV(x)=0,
\end{equation}
for $x$ in $(0,\epsilon)$;}
\item[(ii)]{if $\rho_2<\rho<\rho_1$, then
$V$ solves
\begin{equation}\label{HJBav}
\sup_{0\le a\le A}\mathcal{L}^{a}V(x)=\mathcal{L}^{a_V(x)}V(x)=0,
\end{equation}
 for $x$ in $(0,\epsilon)$;}
\item[(iii)]{if $\rho>\rho_1$, then
$V$ solves
\begin{equation}\label{HJB0}
\sup_{0\le a\le A}\mathcal{L}^aV(x)=\mathcal{L}^0V(x)=0,
\end{equation}
for $x$ in $(0,\epsilon)$.}
\end{itemize}
\end{theorem}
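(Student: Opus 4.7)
The plan is to reduce the HJB equation at each $x$ to the minimization of $T_v(\cdot,x)$ over $[0,A]$, solve this minimization at $x=0$ by elementary calculus, and then propagate the answer to $(0,\epsilon)$ by joint continuity. By Lemma~\ref{v'-} we have $V''(x)=v'(x)<0$ on some interval $(0,\epsilon_0)$; there the quadratic $a\mapsto \mathcal{L}^{(a)}V(x)$ is strictly concave in $a$, and its maximizer over $[0,A]$ is exactly the minimizer of $T_v(\cdot,x)$ over $[0,A]$ (the same correspondence that produces the fixed-point equation $v'=Tv$). So it suffices to identify this minimizer for $x$ near $0$.

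The analysis at $x=0$ is clean because $V(0)=0$, $v(0)=1$, and $M(V)(0)=0$, so
\begin{equation*}
T_v(a,0)\;=\;h(a)\;:=\;\frac{-2[c+(\mu-r)a]}{\sigma^2a^2+2\rho\sigma\sigma_1 a+\sigma_1^2}.
\end{equation*}
A direct computation shows $h'(a)$ is proportional to the upward-opening quadratic
\begin{equation*}
p(a)=(\mu-r)\sigma^2a^2+2c\sigma^2 a+2c\rho\sigma\sigma_1-(\mu-r)\sigma_1^2,
\end{equation*}
and the first algebraic check I would carry out is the clean identities $p(0)=2c\sigma\sigma_1(\rho-\rho_1)$ and $p(A)=2c\sigma\sigma_1(\rho-\rho_2)$, which is precisely where the thresholds come from. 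Since the sum of the roots of $p$ is $-2c/(\mu-r)<0$, Vieta gives three sub-cases at $x=0$: \textbf{(iii)} if $\rho>\rho_1$, both roots of $p$ are negative, so $p>0$ on $[0,A]$, $h$ is strictly increasing, and the minimum is at $a=0$; \textbf{(i)} if $\rho<\rho_2$, the unique positive root of $p$ exceeds $A$, so $p<0$ on $[0,A]$, $h$ is strictly decreasing, and the minimum is at $a=A$; \textbf{(ii)} if $\rho_2<\rho<\rho_1$, the unique positive root $a_+$ of $p$ lies in $(0,A)$ and is a non-degenerate interior minimum of $h$, while the first-order condition $\partial_a\mathcal{L}^{(a)}V(0)=0$ identifies $a_+$ with $a_V(0)$.

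To extend to $x\in(0,\epsilon)$, I would use joint continuity of $\partial_a T_v(a,x)$ in $(a,x)$. In cases (i) and (iii), $\partial_a T_v(\cdot,0)$ has a definite sign on all of $[0,A]$ that is bounded away from zero, so by compactness this sign persists for all $x$ in a small neighborhood of $0$, forcing $a^*(x)\equiv A$ or $a^*(x)\equiv 0$ on $(0,\epsilon)$ and giving \eqref{HJBA} or \eqref{HJB0}. In case (ii) the non-degeneracy $h''(a_+)>0$ lets the implicit function theorem applied to $\partial_a T_v(a,x)=0$ at $(a_+,0)$ produce a continuous family of interior minimizers $a^*(x)\in(0,A)$; the same first-order condition identifies $a^*(x)=a_V(x)$, yielding \eqref{HJBav}. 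I expect the main subtlety to lie in (i) and (iii): a sign of $\partial_a T_v$ at a single endpoint is not enough, so one needs the sign of $\partial_a T_v(\cdot,0)$ to hold uniformly on all of $[0,A]$ before joint continuity can promote it to a neighborhood of $x=0$, and this uniformity is exactly what the Vieta description of $p$ delivers.
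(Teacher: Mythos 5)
Your proposal is correct, and it reaches the theorem by a recognizably different mechanism than the paper even though the core computation is shared. The heart of both arguments is the same univariate calculus: your $h(a)=T_v(a,0)$ is exactly $-f(a)$ for the paper's function \eqref{f}, your quadratic $p$ has precisely the roots $a_1,a_2$ the paper exhibits, and your identities $p(0)=2c\sigma\sigma_1(\rho-\rho_1)$, $p(A)=2c\sigma\sigma_1(\rho-\rho_2)$ are the same threshold equivalences the paper uses ($\rho<\rho_2\iff A<a_2$, etc.). Where you diverge is in how the $x=0$ analysis is propagated to $(0,\epsilon)$: the paper stays at the level of the HJB identity, deriving the inequalities \eqref{eq1} and \eqref{eq2} that force $a_V(x)>A$ (resp. $a_V(x)<0$) for small $x$ using $M(V)(x)\to0$, $v(x)\to1$, and then settles case (ii) by a contradiction argument; you instead exploit the identity $\mathcal{L}^{(a)}V(x)=\tfrac12\bigl(\sigma^2a^2+2\rho\sigma\sigma_1a+\sigma_1^2\bigr)\bigl[v'(x)-T_v(a,x)\bigr]$ together with $v'=Tv$ to convert the HJB maximization into minimization of $T_v(\cdot,x)$, and then use joint continuity of $\partial_aT_v$, compactness of $[0,A]$, and the implicit function theorem. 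Your route has the incidental benefit of making explicit why $V$ solves \eqref{HJB} at all (which the paper dismisses as ``easy to check''), and in cases (i) and (iii) it does not even need Lemma~\ref{v'-}. One point you should make explicit to close case (ii): the implicit function theorem only hands you a local branch of critical points, whereas the conclusion \eqref{HJBav} requires that the \emph{global} minimizer of $T_v(\cdot,x)$ on $[0,A]$ be interior for small $x$ — this follows from persistence of the strict endpoint signs $\partial_aT_v(0,0)\propto p(0)<0$ and $\partial_aT_v(A,0)\propto p(A)>0$, i.e.\ the same uniform-sign device you already use in (i) and (iii) — and that the interior maximizer of the concave map $a\mapsto\mathcal{L}^{(a)}V(x)$ is $a_V(x)$, which is where $v'(x)<0$ from Lemma~\ref{v'-} genuinely enters. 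With that sentence added, your argument is complete and, arguably, more systematic than the paper's case-by-case inequalities.
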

\begin{proof}
Suppose $\rho<\rho_2$.
Define function
\begin{equation}\label{f}
f(a):=\frac{c+(\mu-r)a}{\frac{1}{2}(\sigma^2a^2+2\rho\sigma\sigma_1a+\sigma_1^2)}.
\end{equation}
By differentiation, function $f$ increases on $(a_1,a_2)$
where
\begin{equation}
a_1,a_2=\frac{-c\pm\sqrt{c^2+(\mu-r)^2\sigma_1^2/\sigma^2-2(\mu-r)\rho c\sigma_1/\sigma}}{\mu-r}.
\end{equation}
Note it holds $a_1<0<a_2$ when $\rho<\rho_1$.
Further, condition $\rho<\rho_2$ is equivalent to $A<a_2$. Thus under the condition
it holds $f(a)\le f(A)$ for $a$ in interval $(0,A)$.
Since $V$ solves HJB equation \eqref{HJB}, we suppose
$\mathcal{L}^{a^*(x)}V(x)=0$ for some $0\leq a^*(x)\leq A$. Then we have
\begin{equation}\begin{split}
v'(x)/v(x)
&=\frac{M(V)(x)/v(x)-[c+(\mu-r)a^*(x)+rx]}{\frac{1}{2}[\sigma^2(a^*(x))^2+2\rho\sigma\sigma_1a^*(x)+\sigma_1^2]}\\
&=\frac{M(V)(x)/v(x)-rx}{\frac{1}{2}[\sigma^2(a^*(x))^2+2\rho\sigma\sigma_1a^*(x)+\sigma_1^2]}-f(a^*(x))\\
&\ge \frac{M(V)(x)/v(x)-rx}{\frac{1}{2}[\sigma^2(a^*(x))^2+2\rho\sigma\sigma_1a^*(x)+\sigma_1^2]}-f(A).
\end{split}
\end{equation}
So it holds
\begin{equation}\label{eq1}
a_V(x)\ge \frac{\mu-r}{\sigma^2}\frac{1}{f(A)-\frac{M(V)(x)/v(x)-rx}{\frac{1}{2}[\sigma^2(a^*(x))^2+2\rho\sigma\sigma_1a^*(x)+\sigma_1^2]}}-\rho\sigma_1/\sigma>A,
\end{equation}
when $x$ is sufficiently small. The second inequality is because $\lim_{x\rightarrow 0}M(V)(x)=0$,
$\lim_{x\rightarrow 0}v(x)=1$ and
the fact that the parameter condition $\rho<\rho_2$
is equivalent to $$\frac{\mu-r}{\sigma^2f(A)}-\rho\sigma_1/\sigma>A.$$
Also, from Lemma~\ref{v'-}, we have $v'(x)<0$ when $x$ is small. Thus from \eqref{eq1} ($a_V(x)>A$),
it holds $$\sup_{0\le a\le A}\mathcal{L}^aV(x)=\mathcal{L}^AV(x),$$
when $x$ is small.
Since $V$ solves the HJB equation and it holds $\sup_{0\le a\le A}\mathcal{L}^aV(x)=0$, this proves \eqref{HJBA}.

Now we suppose $\rho>\rho_1$ and prove (iii). From $\mathcal{L}^0V(x)\le 0$, we have
\begin{equation}
v'(x)/v(x)\le \frac{M(V)(x)/v(x)-(c+rx)}{\frac{1}{2}\sigma_1^2}.
\end{equation}
Then it holds
\begin{equation}\label{eq2}
a_V(x)\le \frac{\mu-r}{\sigma^2}\frac{\frac{1}{2}\sigma_1^2}{c-[M(V)(x)/v(x)+rx]}-\rho\sigma_1/\sigma<0,
\end{equation}
for small $x$, where the second inequality is due to parameter condition $\rho>\rho_1$.
Thus it holds
$$\sup_{0\le a\le A}\mathcal{L}^aV(x)=\mathcal{L}^0V(x),$$
for small $x$. This proves \eqref{HJB0}.

Under condition $\rho_2<\rho<\rho_1$, we prove \eqref{HJBav} by contradiction.
Suppose the maximizer of $\mathcal{L}^aV(x)$ in $a$ is not $a_V(x)$ but $0$ or $A$.
Then HJB equation \eqref{HJBA} or \eqref{HJB0} holds.
If it holds $\mathcal{L}^AV(x)=0$, we have
$$a_V(x)=\frac{\mu-r}{\sigma^2}\frac{1}{f(A)-
\frac{M(V)(x)/v(x)-rx}{\frac{1}{2}(\sigma^2A^2+2\rho\sigma\sigma_1A+\sigma_1^2)}}
-\rho\sigma_1/\sigma<A,$$
for small $x$ due to $\rho>\rho_2$. Thus the maximizer of $\mathcal{L}^aV(x)$ in $a$
is not $A$ and $\mathcal{L}^AV(x)<0$. Contradiction!

If $\mathcal{L}^0V(x)=0$, we have
$$a_V(x)=\frac{\mu-r}{\sigma^2}\frac{\frac{1}{2}\sigma_1^2}{c-[M(V)(x)/v(x)+rx]}-\rho\sigma_1/\sigma>0,
$$
for small $x$ due to $\rho<\rho_1$. Thus the maximizer of $\mathcal{L}^aV(x)$ in $a$
is not $0$ and $\mathcal{L}^0V(x)<0$. Contradiction!

Hence under parameter condition $\rho_2<\rho<\rho_1$, the maximizer of $\mathcal{L}^aV(x)$ in $a$
is $a_V(x)$ for small $x$ and it holds \eqref{HJBav}.
\end{proof}

By Theorem~\ref{th1}, we obtain the following properties:
\begin{remark}
At low surplus level $x$, the optimal investment control $a_V^*$, is $a_V^*(x)=A$ if $\rho<\rho_2$,
 $a_V^*(x)=a_V(x)$ if $\rho_2<\rho<\rho_1$, and $a_V^*(x)=0$ if $\rho>\rho_1$.
Moreover, when $\rho_2<\rho<\rho_1$, letting $x\rightarrow 0$ in \eqref{vTv}, $a_V^*(0+)$ is equal to $a_V(0+)$
and solves the following equation:
 $$-\frac{1}{\frac{\sigma^2}{\mu-r}(a+\rho\frac{\sigma_1}{\sigma})}
=\frac{-2[c+(\mu-r)a]}{\sigma^2a^2+2\rho\sigma\sigma_1a+\sigma_1^2},$$
 which simplifies to $$(\mu-r)\sigma^2a^2+2c\sigma^2a+[2\rho\sigma\sigma_1c-\sigma_1^2(\mu-r)]=0.$$
 Thus \begin{equation}\label{astar0}
a_V^*(0+)=-\frac{c}{\mu-r}+\sqrt{\frac{c^2}{(\mu-r)^2}+\frac{2\sigma_1c}{\sigma(\mu-r)}(\rho_1-\rho)}.
 \end{equation}
Note it holds $0<a_V^*(0+)<A$
under the parameter condition $\rho_2<\rho<\rho_1$.
If $\mu\neq r$, the quadratic equation above has two real roots.
\end{remark}
\begin{remark}
$v'(0+)=-\frac{2[c+(\mu-r)A]}{\sigma^2A^2+2\rho\sigma\sigma_1A+\sigma_1^2}$ if $\rho<\rho_2$;
$v'(0+)=-\frac{2c}{\sigma_1^2}$ if $\rho>\rho_1$;
and $v'(0+)=-\frac{\mu-r}{\sigma^2[a_V^*(0+)+\rho\frac{\sigma_1}{\sigma}]}
=-\frac{\mu-r}{\sigma^2[\rho\frac{\sigma_1}{\sigma}
-\frac{c}{\mu-r}+\sqrt{\frac{c^2}{(\mu-r)^2}
+\frac{2\sigma_1c}{\sigma(\mu-r)}(\rho_1-\rho)}]}$ if $\rho_2<\rho<\rho_1$.
Note it holds $v'(0+)<0$ in all the cases.
\end{remark}

\section{Asymptotic investment at low surplus - the unconstrained case}
In this section, we consider the unconstrained case. That is,
the control region is given by $\mathcal{A}=(-\infty,\infty)$.
We assume $\mu\neq r$ in the derivation part of this section. 
The special case with $\mu=r$ is addressed in Remark~\ref{rmk-mr}.

Let us assume that the HJB equation \eqref{HJB} has a classical bounded solution $V(x)$ which satisfies
  \begin{equation}
  \label{incrconc}
  V'(x)>0,\ V''(x)<0,
\end{equation}
and
\begin{equation}
\label{v0}
V(0) = 0,
\end{equation}
\begin{equation}
\label{v'0}
V'(0+)=1.
\end{equation}
Then the maximizer in \eqref{HJB} is
\begin{equation}
\label{optimum_nonconstr}
 a^*_V(x)=a_V(x),
\end{equation}
 and $V$ solves
\begin{equation}\label{HJBnonc}
\sup_{-\infty< a< \infty}\mathcal{L}^{a}V(x)=\mathcal{L}^{a_V(x)}V(x)=0,
\end{equation}
i.e.,
\begin{equation}
\label{noconstrain}
\left[{c + rx-\frac{\rho(\mu  - r)\sigma _1}{\sigma}}\right]V'\left( x
\right) + \frac{1}{2}\sigma _1^2(1-\rho ^2) V''\left( x \right) +
\lambda E\left[ {V\left( {x - Y} \right) - V\left( x \right)}
\right]
=\gamma\frac{{\left( {V'\left( x \right)} \right)^2 }}{{V''\left( x \right)}}.
\end{equation}
where
\begin{equation}\label{gamma}
\gamma=\frac{{\left(\mu  - r\right)^2 }}{{2\sigma ^2}}.
\end{equation}
\begin{remark}
It is easy to see that in the unconstrained case,
the limit optimal investment amount at $0+$ is $a_V^*(0+)=a_V(0+)$, which is given in \eqref{astar0}. It holds  $a_V^*(0+)>0$ under the parameter condition $\rho<\rho_1$, $a_V^*(0+)=0$ if $\rho=\rho_1$, and  ${a_V^*(0+)<0}$ if $\rho>\rho_1$.
This corrects a mistaken assumption on page 624 of \cite{YZ0} that
the optimal initial investment amount is always $0$.
\end{remark}
Below we proceed to show that there exists a classical solution $V$
that solves the HJB equation \eqref{HJBnonc}
with respect to conditions \eqref{incrconc}, \eqref{v0} and \eqref{v'0}.

Let  $H(y) = 1 - F(y)$. Then taking into account condition
\eqref{v0}, the equation \eqref{noconstrain} can be rewritten in
the form
  \begin{equation}
  \label{newform}
\left[c + rx - \frac{\rho(\mu  - r)\sigma _1}{\sigma}\right]V'(x) +
\frac{1}{2}\sigma _1^2 (1-\rho ^2)V''(x) - \lambda \int\limits_0^x
H(y)V'(x - y)dy
=\gamma\frac{{\left( {V'(x)} \right)^2
}}{{V''(x)}},
\end{equation}
or
\begin{equation}\label{newform1}
\frac{1}{2}\sigma _1^2 (1-\rho ^2)\left [V''(x)\right ]^2+L_1V'(x)V''(x)-\frac{1}{2}\left[L_2V'(x)\right]^2=0,
\end{equation}
where operators $L_1$ and $L_2$ are defined as the following:
\begin{equation}
\begin{split}
&L_1w(x)=\left[c + rx - \frac{\rho(\mu  - r)\sigma _1}{\sigma}\right]w(x) -
\lambda \int\limits_0^x H(y)w(x - y)dy,\\
&L_2w(x)=\frac{(\mu- r)}{\sigma}w(x),
\end{split}
\end{equation}
for any continuous function $w$ on $(0,\infty)$.
Then from \eqref{newform1} and \eqref{incrconc}, $V''$ satisfies
\begin{equation*}
V''(x)=LV'(x),
\end{equation*}
with condition $V''\le 0$, where the operator $L$ is defined by:
\begin{equation}\label{L}
Lw(x)=-\frac{L_1w(x)+\sqrt{(L_1w(x))^2
+\sigma _1^2(1-\rho^2)(L_2w(x))^2}}{\sigma _1^2(1-\rho^2)}.
\end{equation}

\begin{lemma}
There exists a
continuously differentiable function $v(x)$ on $(0,\infty)$ satisfying
\begin{equation}\label{vLv}
v'(x)=Lv(x),\ v(0)=1.
\end{equation}
\end{lemma}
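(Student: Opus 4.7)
The plan is to mirror the contraction-mapping argument already used for Lemma~\ref{lemmavTv}. I will show that the operator $L$ defined in~\eqref{L} is Lipschitz on $C[0,K]$ under the supremum norm, convert the ODE $v'=Lv$ with initial condition $v(0)=1$ into the fixed-point problem
\begin{equation*}
v(x)=\mathcal{L}v(x):=1+\int_0^x Lv(y)\,dy,
\end{equation*}
verify that $\mathcal{L}$ is a strict contraction on $C[0,K]$ for $K$ small enough, and then iterate.

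First I would check that $L$ maps $C[0,K]$ into itself. For $w\in C[0,K]$, the operators $L_1 w$ and $L_2 w$ are continuous on $[0,K]$ (the convolution term $\int_0^x H(y)w(x-y)\,dy$ is continuous by dominated convergence), and since $1-\rho^2>0$ and the radicand is strictly positive as long as either $L_1w$ or $L_2w$ is nonzero, the composition under the square root is continuous; hence $Lw\in C[0,K]$.

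The main technical step is a Lipschitz bound of the form $\|Lw_1-Lw_2\|\leq C(K)\|w_1-w_2\|$. Using the linearity of $L_1$ and $L_2$, standard estimates give
\begin{equation*}
\|L_1 w_1-L_1 w_2\|\leq\Bigl[c+rK+\tfrac{|\rho|(\mu-r)\sigma_1}{\sigma}+\lambda K\Bigr]\|w_1-w_2\|,\qquad \|L_2w_1-L_2w_2\|\leq \tfrac{\mu-r}{\sigma}\|w_1-w_2\|,
\end{equation*}
where the bound on the convolution uses $\int_0^x H(y)\,dy\leq K$. To control the square root term I use the elementary inequality
\begin{equation*}
\bigl|\sqrt{a_1^2+b_1^2}-\sqrt{a_2^2+b_2^2}\bigr|\leq|a_1-a_2|+|b_1-b_2|,
\end{equation*}
which follows from $\sqrt{\alpha^2+\beta^2}\leq|\alpha|+|\beta|$ together with the reverse triangle inequality in $\mathbb{R}^2$. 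Combining these with the factor $1/[\sigma_1^2(1-\rho^2)]>0$ (this is exactly where the imperfect-correlation assumption $|\rho|\neq 1$ is crucial) yields the claimed Lipschitz constant $C(K)$ in the form $C(K)=C_0+C_1 K$ for suitable $C_0,C_1>0$.

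Once Lipschitz continuity is in hand, the rest is routine. For any $w_1,w_2\in C[0,K]$,
\begin{equation*}
\|\mathcal{L}w_1-\mathcal{L}w_2\|\leq K\,C(K)\,\|w_1-w_2\|,
\end{equation*}
so choosing $K>0$ small enough that $KC(K)<1/2$ makes $\mathcal{L}$ a contraction on $C[0,K]$; Banach's fixed point theorem then supplies a unique fixed point $v\in C[0,K]$, and differentiating $v=\mathcal{L}v$ shows $v'=Lv$ with $v(0)=1$. Because the Lipschitz constant $C(K)$ depends only on $K$ (and not on the solution itself), the argument can be repeated on $[K,2K]$ with the new initial datum $v(K)$ in place of $1$, and so on, producing a continuously differentiable extension to all of $[0,\infty)$. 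The main obstacle I anticipate is rigorously justifying the Lipschitz estimate for the square-root term in~\eqref{L}; everything else is a direct adaptation of the proof of Lemma~\ref{lemmavTv}.
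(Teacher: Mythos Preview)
Your proposal is correct and follows essentially the same route as the paper: it establishes Lipschitz continuity of $L$ via the linear bounds on $L_1$, $L_2$ together with the elementary inequality $\bigl|\sqrt{a_1^2+b_1^2}-\sqrt{a_2^2+b_2^2}\bigr|\le |a_1-a_2|+|b_1-b_2|$, and then applies the contraction-mapping argument of Lemma~\ref{lemmavTv}. The paper's proof is identical in structure, down to the same square-root estimate and the same reduction to the earlier lemma.
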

\begin{proof}
Notice that for any $w\in C[0,K]$ for any $K>0$,
the functions $L_1w$, $L_2w$ and $Lw$ are continuous in $x$.
Consider two continuous functions $w_1(x),w_2(x)$ in $C[0,K]$ and
the supremum norm
$$||w||=\sup\{|w(x)|:0 \le x\le K\}.$$
Then the following inequalities hold:
$$|L_1w_1(x)-L_1w_2(x)|\le [c+rK+(\mu-r)\sigma_1/\sigma+\lambda K]||w_1-w_2||,$$
$$|L_2w_1(x)-L_2w_2(x)|\le (\mu-r)/\sigma||w_1-w_2||.$$
Also notice that function $f(x,y)=\sqrt{x^2+y^2}$ is Lipschitz:
$$|f(x_1,y_1)-f(x_2,y_2)|\le |x_1-x_2|+|y_1-y_2|.$$
Then we see:
\begin{equation}\begin{split}
&|Lw_1(x)-Lw_2(x)|\\
\leq&\frac{|L_1w_1(x)-L_1w_2(x)|}{\sigma_1^2(1-\rho^2)}
+\frac{|f(L_1w_1(x),\sigma_1\sqrt{1-\rho^2}L_2w_1(x))-
f(L_1w_2(x),\sigma_1\sqrt{1-\rho^2}L_2w_2(x))|}{\sigma_1^2(1-\rho^2)}\\
\leq& \frac{1}{\sigma_1^2(1-\rho^2)}
[2||L_1w_1-L_1w_2||+\sigma_1||L_2w_1-L_2w_2||]\\
\leq& C||w_1-w_2||,
\end{split}
\end{equation}
where
\begin{equation}
C=\frac{2c+2rK+3(\mu-r)\sigma_1/\sigma+2\lambda K}{\sigma_1^2(1-\rho^2)}.
\end{equation}
Thus the operator $L$ is Lipschitz on $C[0,K]$ with constant $C$:
\begin{equation}
||Lw_1-Lw_2||\leq C||w_1-w_2||.
\end{equation}
The rest of the proof is similar to that of Lemma~\ref{lemmavTv}
and we skip it.
We then conclude \eqref{vLv}.
\end{proof}

Write
\begin{equation}\label{V-2}
V(x)=\int_0^xv(y)dy,
\end{equation}
where $v$ is the solution of \eqref{vLv}.
We prove the following lemma:
\begin{lemma}
$v(x)$ is positive and $v'(x)$ is negative on $[0,\infty)$.
\end{lemma}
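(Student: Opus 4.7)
The plan is to first show non-positivity of $v'$ globally, then rule out $v$ reaching zero by a Gronwall-type estimate, and finally upgrade non-positivity of $v'$ to strict negativity once positivity of $v$ is known. Write $\alpha=\sigma_1^2(1-\rho^2)>0$ and $\beta=(\mu-r)^2/\sigma^2>0$, so that $\sigma_1^2(1-\rho^2)(L_2w(x))^2=\alpha\beta\,w(x)^2$. Directly from the definition \eqref{L}, the numerator in $Lw(x)$ satisfies
\begin{equation*}
L_1w(x)+\sqrt{(L_1w(x))^2+\alpha\beta\,w(x)^2}\ge L_1w(x)+|L_1w(x)|\ge 0,
\end{equation*}
so $Lw(x)\le 0$ for every continuous $w$. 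Applied to $v$, this yields $v'(x)=Lv(x)\le 0$ on $[0,\infty)$, hence $v$ is non-increasing with $v(x)\le v(0)=1$.

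For strict positivity of $v$ I would argue by contradiction. Set $x_0=\inf\{x\ge 0:v(x)=0\}$ and suppose $x_0<\infty$. Then $v>0$ on $[0,x_0)$ and $v(x_0)=0$, and evaluating $L_1v$ at $x_0$ gives
\begin{equation*}
L_1v(x_0)=-\lambda\int_0^{x_0}H(y)v(x_0-y)\,dy<0,
\end{equation*}
because $H(0)=1$ with $H$ continuous, so $H>0$ on a right neighborhood of $0$, while $v>0$ on $[0,x_0)$. By continuity of $L_1v$ there exist $\delta>0$ and $K>0$ such that $L_1v(x)\le -K$ for every $x\in[x_0-\delta,x_0]$.

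The decisive step is the elementary estimate $\sqrt{A^2+B^2}\le|A|+|B|$ applied with $A=L_1v(x)$ and $B=\sqrt{\alpha\beta}\,v(x)$. Because $L_1v(x)<0$ on $[x_0-\delta,x_0]$, the terms $L_1v(x)$ and $|L_1v(x)|$ cancel, leaving
\begin{equation*}
L_1v(x)+\sqrt{(L_1v(x))^2+\alpha\beta\,v(x)^2}\le \sqrt{\alpha\beta}\,v(x),
\end{equation*}
so that $v'(x)=Lv(x)\ge -\sqrt{\beta/\alpha}\,v(x)$ on this interval. Gronwall's inequality then gives $v(x_0)\ge v(x_0-\delta)\exp\bigl(-\sqrt{\beta/\alpha}\,\delta\bigr)>0$, contradicting $v(x_0)=0$. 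Hence $v(x)>0$ for all $x\ge 0$.

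With positivity of $v$ in hand and $\mu\neq r$, we have $L_2v(x)=(\mu-r)v(x)/\sigma\neq 0$ everywhere, so $\sqrt{(L_1v(x))^2+\alpha\beta\,v(x)^2}>|L_1v(x)|\ge -L_1v(x)$ strictly, upgrading $Lv(x)\le 0$ to $Lv(x)<0$ and giving $v'(x)<0$ throughout $[0,\infty)$. The genuinely delicate point is the Gronwall step: a priori a zero of $v$ would also force $v'(x_0)=0$, so it is not immediately clear that $v$ cannot be driven down to zero; the saving observation is that once $L_1v$ is strictly negative, the cancellation $L_1v+|L_1v|=0$ leaves only a term linear in $v$ in the numerator of $Lv$, which is precisely what is needed for Gronwall to prevent $v$ from reaching zero in finite distance.
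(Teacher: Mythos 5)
Your proof is correct, and it reaches the conclusion by a route that differs from the paper's in the key estimate while sharing its overall skeleton: both argue by contradiction at the first zero $x_0=\inf\{x:v(x)=0\}$ and defeat $v(x_0)=0$ by controlling the logarithmic derivative $-v'/v$ near $x_0$ and integrating. The paper reverts to the integro-differential equation \eqref{newform} satisfied by $V$ on $(0,x_0)$, passes to the limit $x\uparrow x_0$ to conclude $V''(x)/V'(x)\to 0$, and then integrates $-(\ln v)'$ to obtain the contradiction. You instead stay entirely at the level of the operator \eqref{L}: from $v(x_0)=0$ you get $L_1v(x_0)=-\lambda\int_0^{x_0}H(y)v(x_0-y)\,dy<0$, and then $\sqrt{A^2+B^2}\le|A|+|B|$ together with the cancellation $L_1v+|L_1v|=0$ yields the linear differential inequality $v'\ge-\sqrt{\beta/\alpha}\,v$ on $[x_0-\delta,x_0]$, which Gronwall turns into $v(x_0)>0$. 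Your argument is more elementary and self-contained (no limit computation with $(V')^2/V''$ is needed) and gives an explicit quantitative bound on $-v'/v$ near a hypothetical zero, whereas the paper's limit argument shows the stronger qualitative fact that this ratio would tend to $0$ there; both yield the same contradiction. Your final step, upgrading $v'\le 0$ to $v'<0$ from $v>0$, $\mu\neq r$ and hence $L_2v\neq 0$, makes explicit what the paper only asserts. Two minor points: you should note $x_0>0$ (immediate from $v(0)=1$ and continuity), which is what makes the integral in $L_1v(x_0)$ strictly positive, and the preliminary global observation $Lw\le 0$, while true, is not actually needed for the contradiction.
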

\begin{proof}
We prove by contradiction. Define
$$x_0={\inf}\{x>0: v(x)=0\},$$
and suppose $x_0<\infty$. Then it holds $v(x)>0$ on $[0,x_0)$,
 $v(x_0)=0$ and $v'(x_0)=0$
 (due to continuity of $v$
 and definition of $L$). It also holds $v'(x)<0$ for $x\in [0,x_0)$.

Notice that $V$ satisfies \eqref{newform} on $(0,x_0)$.
Passing $x\uparrow x_0$ in \eqref{newform}, we obtain
$$\underset{x\uparrow x_0}{\lim}
 \frac{{\left(\mu  - r\right)^2 }}{{2\sigma ^2 }}\frac{{\left( {V'(x)} \right)^2
}}{{V''(x)}}= -\lambda\int\limits_0^{x_0}H(y)V'(x_0 - y)dy,$$
wherefrom we see $\underset{x\uparrow x_0}{\lim}
\frac{V'(x)}{ V''(x)/V'(x)}$ is finite and hence
$$\underset{x\uparrow x_0}{\lim}
\frac{V''(x)}{ V'(x)}=0.$$
Choose $x_1$ and $x_2$ close to $x_0$ with $0<x_2<x_1<x_0$ such that
$-\frac{v'(x)}{v(x)}<1$ for $x\in (x_2,x_0)$. Thus
\begin{equation}
\ln v(x_2)-\ln v(x_1)=\int_{x_2}^{x_1}-(\ln v(x))'dx=\int_{x_2}^{x_1}-\frac{v'(x)}{v(x)}dx
<x_1-x_2.
\end{equation}
Passing $x_1\uparrow x_0$ in the above, contradiction!
Thus it must hold $x_0=\infty$ and we conclude $v(x)$ is
positive on $[0,\infty)$. Immediately, $v'(x)$ is negative.
\end{proof}
To this end, we see that the function $V$ given by \eqref{V-2}
solves the HJB equation \eqref{HJBnonc}
with respect to conditions \eqref{incrconc}, \eqref{v0} and \eqref{v'0}.
By a verification result (similar to Lemma~\ref{VER}),
we see that the maximal survival function (value function) is given by
$\delta(x)=V(x)/V(\infty)$.

Next we study the asymptotic behavior of the HJB solution
for low initial surplus.
Write
$$\tilde{v}(x) =v(x)- 1,$$
and
\begin{equation}\label{crho}
c_\rho=c-\frac{\rho(\mu - r)\sigma _1}{\sigma},\
\sigma_\rho^2=\sigma_1^2(1-\rho ^2),
\end{equation}
then  $\tilde{v}(x)$   satisfies equation
\begin{equation*}\label{newform2}
(c_\rho+rx)(\tilde{v}(x) +1) + \frac{1}{2}\sigma_\rho^2\tilde{v}'(x)
-\lambda \int\limits_0^x H(y)\left( {\tilde{v}(x - y) + 1} \right)dy
=\gamma \frac{{\left( {\tilde{v}(x) + 1} \right)^2 }}{{\tilde{v}'(x)}},
\end{equation*}
where  $\gamma$ is defined in \eqref{gamma}, wit initial condition
\begin{equation}
\label{tildev'0}
\tilde{v}(0+)=0.
\end{equation}
Multiplying both sides of the last equation by $\tilde{v}'(x)$,
we get equation
\begin{equation} \label{tilde v}\begin{split}
&(c_\rho+rx)(\tilde{v}(x) +1)\tilde{v}'(x) + \frac{1}{2}\sigma_\rho^2(\tilde{v}'(x))^2\\
&-\lambda \tilde{v}'(x)\int\limits_0^xH(y)\left( {\tilde{v}(x - y) + 1} \right)dy = \gamma {{\left( {\tilde{v}(x) + 1} \right)^2 }}.
\end{split}
\end{equation}
 We find  representations of the solution of equation \eqref{tilde v} with condition \eqref{tildev'0} and its derivative in such forms as:
 $$\tilde{v}(x) = \alpha x^\beta  (1 + o(1)),\
{\tilde{v}'(x) = \beta \alpha x^{\beta-1}(1 + o(1))},\ x \to 0, $$
where  $\beta >0$ and $\alpha$  are some constants. Taking into account that
   ${H(x) = 1 + o(1)}$, $x \to 0$, we have from \eqref{tilde v} on principal terms of expansion:
\begin{equation*}
\begin{split}
&\left[{\gamma\alpha^2x^{2\beta}+2\gamma\alpha x^\beta+\gamma}\right]
\left({1+ o(1)} \right)\\
=&\left[ {c_\rho\alpha^2 \beta x^{2\beta-1}  + c_\rho\alpha\beta x^{\beta  - 1}  + \left( {r - \frac{\lambda }{{\beta  + 1}}} \right)\alpha^2 \beta x^{2\beta }   } \right. \\
 &+\left. {(r - \lambda )\alpha\beta x^\beta
 + \frac{1}{2}\sigma_\rho^2\alpha^2 \beta ^2 x^{2\beta-2}}\right]
\left( {1 + o(1)}\right),
\,\,\,x \to 0.
\end{split}
 \end{equation*}
From this relation it is easy to see that
$\beta = 1.$   Then
$$c_\rho\alpha+\frac{1}{2}\sigma_\rho^2\alpha^2=\gamma.$$
Therefore
\begin{equation*}
\begin{split}
\alpha &= \tilde{v}'(0+)\\
&=-\left(c_\rho+\sqrt{c_\rho^2+2\gamma\sigma_\rho^2}\right)/{\sigma_\rho^2}\\
&=-\frac{\mu-r}{\sigma^2[\rho\frac{\sigma_1}{\sigma}
-\frac{c}{\mu-r}+\sqrt{\frac{c^2}{(\mu-r)^2}
+\frac{2\sigma_1c}{\sigma(\mu-r)}(\rho_1-\rho)}]}.
\end{split}
\end{equation*}
Recall that ${\tilde{v}}'(0+)= V''(0+) < 0$  in view of \eqref{incrconc}.
We have,
   \begin{equation}
   \label{V'(x)}
V'\left( x \right) = 1 -  Bx(1 + o(1)),\,\,\,x \to 0,
\end{equation}
and
 \begin{equation}
 \label{V''(x)}
V''\left( x \right) =  -  B + o\left( 1 \right),\,\,\,x \to 0,
\end{equation}
 where
 \begin{equation}\label{B}
 B=\left(c_\rho+ \sqrt {c_\rho^2  +2\gamma\sigma_\rho^2}\right)/{\sigma_\rho^2}.
 \end{equation}
Note that
$$B=\frac{\mu-r}{\sigma^2[a_V^*(0+)+\rho\frac{\sigma_1}{\sigma}]},$$
where $a_V^*(0+)$ is given by \eqref{astar0}.
Hence, we get asymptotic representation of $V(x)$:
$$V(x)={x-(B/2)x^2(1 + o(1))},\ x \to 0.$$
In view of the value function $\delta(x)$, we have
\begin{theorem}\label{thm41}
It holds $$\delta(x)=C[x-(B/2)x^2 (1 + o(1))] ,\,\,\,x \to 0,$$
where $B$ is given in \eqref{B} and $C=1/V(\infty)$ is a positive constant.
\end{theorem}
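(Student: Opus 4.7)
The plan is to assemble the theorem directly from the asymptotic analysis of $\tilde v(x)=v(x)-1$ carried out immediately before the statement. First I would confirm the power-law ansatz $\tilde v(x)=\alpha x^\beta(1+o(1))$, $\tilde v'(x)=\beta\alpha x^{\beta-1}(1+o(1))$ as $x\to 0$, and substitute it into \eqref{tilde v}. Using that $H(y)=1+o(1)$ as $y\to 0$ (which follows from the assumed continuity of the density of $F$) and the continuity of $\tilde v$, one estimates
\[
\int_0^x H(y)(\tilde v(x-y)+1)\,dy = x(1+o(1)),\quad x\to 0,
\]
so that the integral term contributes only to the $x^{\beta+1}$ scale. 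Collecting orders, the dominant terms on the right of \eqref{tilde v} are of order $x^{\beta-1}$, $x^{2\beta-2}$ and $x^\beta$, while the left side produces the constant $\gamma$. A dominant balance argument forces $\beta=1$: for $\beta<1$ the term $\tfrac12\sigma_\rho^2(\tilde v'(x))^2$ of order $x^{2\beta-2}$ blows up and cannot be cancelled; for $\beta>1$ every right-hand term vanishes, contradicting the nonzero constant $\gamma$ on the left.

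With $\beta=1$ fixed, matching the $x^0$ terms yields the quadratic
\[
c_\rho\alpha+\tfrac{1}{2}\sigma_\rho^2\alpha^2=\gamma.
\]
The condition $\tilde v'(0+)=V''(0+)<0$ from \eqref{incrconc} selects the negative root, giving $\alpha=-B$ with $B$ as in \eqref{B}. This immediately produces \eqref{V'(x)} and \eqref{V''(x)}. Integrating $V'(x)=1-Bx(1+o(1))$ from $0$ to $x$, using $V(0)=0$, gives
\[
V(x)=x-\tfrac{B}{2}x^2(1+o(1)),\quad x\to 0.
\]

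Finally, by the verification result (the unconstrained analogue of Lemma~\ref{VER}), the maximal survival probability satisfies $\delta(x)=V(x)/V(\infty)$, with $V(\infty)$ finite and positive since $V$ is bounded and strictly increasing. Setting $C=1/V(\infty)$ yields the claimed asymptotic expansion for $\delta(x)$.

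The main obstacle is the first step: rigorously justifying that the formal power-matching genuinely captures the leading-order behavior. One must verify that each $o(1)$ absorbed in the ansatz is legitimate — in particular, that the convolution integral does indeed reduce to $x(1+o(1))$, which relies on $H$ being continuous at $0$ together with the boundedness of $\tilde v$ in a neighborhood of $0$ established earlier. Once this control is in place, selecting the correct root of the quadratic via the sign constraint and integrating are routine, and the theorem follows.
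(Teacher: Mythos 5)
Your proposal is correct and follows essentially the same route as the paper: the power ansatz for $\tilde v$, substitution into \eqref{tilde v} with $H(y)=1+o(1)$, dominant balance forcing $\beta=1$, the quadratic $c_\rho\alpha+\tfrac12\sigma_\rho^2\alpha^2=\gamma$ with the negative root chosen via $V''(0+)<0$, integration using $V(0)=0$, and the verification identity $\delta=V/V(\infty)$. Your added remarks on justifying the dominant-balance step and controlling the convolution integral merely make explicit what the paper treats as routine.
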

 Next, we find more exact asymptotic representation
of $ V''(x)$ to obtain an asymptotic representation of the optimal strategy.
For this, we introduce the change of variables
$$ \tilde{v}(x) = -B x  (1 + z(x)),$$
where $B$ is defined in \eqref{B}. Then
$${\tilde{v}'(x) = -B  (1 + z(x))}-B x z'(x),\ \ x \to 0.$$
We characterize $z(x)$ in the following form:
$$z(x)=\eta x^{\theta}(1+o(1)),$$
 where  $\theta >0$ and $\eta$  are some constants.
From \eqref{tilde v} we have $\theta=1$ and then
  $$\eta =\frac{\lambda -r +2\gamma+Bc_{\rho}}{2(c_{\rho}-B\sigma_{\rho}^2)},$$
 and
\begin{equation}
\label{V''(x)new}
V''(x)=-B-2B\eta x(1+o(1)), \,x \to 0.
\end{equation}
For the optimal
strategy \eqref{optimum_nonconstr}, in view of \eqref{aw},
\eqref{V'(x)} and \eqref{V''(x)new},
we obtain for $x \to 0$:
\begin{equation}
\begin{split}
\label{a*zero}
a^*_V(x)&=-\frac{(\mu-r)V'(x)}{\sigma^2V''(x)}-\rho\frac{\sigma_1}{\sigma}\\
&=\frac{(\mu-r)[1-Bx(1+o(1))]}{\sigma^2B[1+2\eta x(1+o(1))]}-\rho\frac{\sigma_1}{\sigma}\\
&=-\frac{c}{\mu-r}+\sqrt{\frac{c^2}{(\mu-r)^2}+\frac{2\sigma_1c}{\sigma(\mu-r)}(\rho_1-\rho)}
-\frac{\mu
-r}{\sigma ^2}\left(1+\frac{2\eta}{B}\right)x(1 + o(1)).
\end{split}
\end{equation}
Finally, we have
\begin{theorem}\label{thm42}
For the optimal investment strategy, it holds
$$ a^*_V(x)=a_V^*(0+)
-\left(\frac{\mu
-r}{\sigma ^2}-\frac{\left[\lambda -r+ \frac{{\left( {\mu  - r} \right)^2 }}{{\sigma ^2 }}\right]\left[a_V^*(0+)+\rho\frac{\sigma_1}{\sigma}\right]+c_\rho \frac{{\left( {\mu  - r} \right) }}{{\sigma ^2 }}}{\sqrt{c_\rho ^2+\frac{{\left( {\mu  - r} \right)^2 }}{{\sigma ^2 }}\sigma _\rho ^2}}\right)x(1 + o(1)),$$
when $x \to 0$, where $a^*_V(0+)$ is given in \eqref{astar0}, $c_{\rho}$ and $\sigma_{\rho}$ are given in \eqref{crho}.
\end{theorem}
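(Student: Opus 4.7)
The plan is to continue the asymptotic expansion of $V'$ and $V''$ at $x = 0$ one order beyond the leading terms in \eqref{V'(x)} and \eqref{V''(x)}, and then substitute into the closed-form expression $a^*_V(x) = -(\mu-r)V'(x)/[\sigma^2V''(x)] - \rho\sigma_1/\sigma$ coming from \eqref{aw} and \eqref{optimum_nonconstr}. The framework has essentially been set up in the paragraph preceding the theorem; the task is to organize it into a clean proof and perform the final algebraic simplification.

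First I would introduce the change of variable $\tilde v(x) = v(x)-1 = -Bx(1+z(x))$ with $z(0+)=0$, as is done just above \eqref{V''(x)new}, and posit $z(x) = \eta x^\theta(1+o(1))$ for constants $\theta>0$ and $\eta$ to be determined. Substituting this ansatz together with $\tilde v'(x) = -B(1+z(x)) - Bxz'(x)$ and the expansion $H(y)=1+o(1)$ into equation \eqref{tilde v} and collecting powers of $x$ gives three pieces of information: the $x^0$ balance recovers the quadratic $-c_\rho B + \frac{1}{2}\sigma_\rho^2 B^2 = \gamma$ consistent with \eqref{B}; matching the next order forces $\theta=1$; and the coefficient balance at order $x^1$ yields
\[
\eta = \frac{\lambda - r + 2\gamma + Bc_\rho}{2(c_\rho - B\sigma_\rho^2)}.
\]
Translating back to $V$ gives $V'(x) = 1 - Bx(1+o(1))$ and $V''(x) = -B - 2B\eta x(1+o(1))$ as $x \to 0$. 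Inserting these into the formula for $a^*_V$ and expanding the quotient yields
\[
a^*_V(x) = \frac{\mu-r}{\sigma^2 B} - \rho\frac{\sigma_1}{\sigma} - \frac{\mu-r}{\sigma^2}\Bigl(1 + \frac{2\eta}{B}\Bigr)x(1+o(1)),
\]
which is \eqref{a*zero}; the constant term equals $a_V^*(0+)$ by \eqref{astar0} and the identity $B(a_V^*(0+) + \rho\sigma_1/\sigma) = (\mu-r)/\sigma^2$ that follows from \eqref{B}.

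The main obstacle, and the only step not already spelled out in the excerpt, is to rewrite the bracketed coefficient $(1+2\eta/B)$ in the form displayed in the statement. Here the key identity is $\sigma_\rho^2 B^2 - 2c_\rho B = 2\gamma$, which is just the defining quadratic for $\alpha = -B$. Applying it to the numerator of
\[
1 + \frac{2\eta}{B} = \frac{B(c_\rho - B\sigma_\rho^2) + \lambda - r + 2\gamma + Bc_\rho}{B(c_\rho - B\sigma_\rho^2)}
\]
collapses the $\pm 2\gamma$ terms and leaves $(\lambda-r)$ upstairs. Combining this with $c_\rho - B\sigma_\rho^2 = -\sqrt{c_\rho^2 + 2\gamma\sigma_\rho^2}$ from \eqref{B}, and using $B(a_V^*(0+)+\rho\sigma_1/\sigma) = (\mu-r)/\sigma^2$ to reintroduce $a_V^*(0+)$ in the numerator factor $(\lambda - r + 2\gamma + Bc_\rho)$, matches the stated expression term by term (the $c_\rho(\mu-r)/\sigma^2$ piece comes from the $Bc_\rho$ summand and the $[\lambda - r + 2\gamma][a_V^*(0+) + \rho\sigma_1/\sigma]$ piece comes from the $\lambda - r + 2\gamma$ summand). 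No new regularity input is needed beyond what already established the leading-order expansion; the work is bookkeeping.
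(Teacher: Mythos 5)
Your proposal is correct and follows essentially the same route as the paper: the change of variables $\tilde v(x)=-Bx(1+z(x))$ with the ansatz $z(x)=\eta x^{\theta}(1+o(1))$, the determination $\theta=1$ and $\eta=\frac{\lambda-r+2\gamma+Bc_\rho}{2(c_\rho-B\sigma_\rho^2)}$ from \eqref{tilde v}, and substitution into $a^*_V(x)=-\frac{(\mu-r)V'(x)}{\sigma^2 V''(x)}-\rho\frac{\sigma_1}{\sigma}$ to get \eqref{a*zero}. Your closing algebra, using $\sigma_\rho^2B^2-2c_\rho B=2\gamma$, $c_\rho-B\sigma_\rho^2=-\sqrt{c_\rho^2+2\gamma\sigma_\rho^2}$ and $B\bigl[a_V^*(0+)+\rho\frac{\sigma_1}{\sigma}\bigr]=\frac{\mu-r}{\sigma^2}$, correctly supplies the term-by-term identification of $\frac{\mu-r}{\sigma^2}\bigl(1+\frac{2\eta}{B}\bigr)$ with the coefficient displayed in the theorem, a step the paper leaves implicit.
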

\begin{remark}
We note that the results in Theorems~\ref{thm41} and ~\ref{thm42}
hold also in the constrained case under the parameter condition
$\rho_2<\rho<\rho_1$.
These results are obtained for any claim size distribution
with the property $H(x)=1-o(1)$ when $x\rightarrow 0$.
\end{remark}
\begin{remark}\label{rmk-mr}
The results in Theorems~\ref{thm41}
and ~\ref{thm42} hold for the parameter case $\mu=r$ under which
the optimal investment amount is a constant strategy with
$a^*_V(x)\equiv-\rho\frac{\sigma_1}{\sigma}$ and $B=2c_\rho/\sigma_\rho^2$.
\end{remark}
\begin{remark} In the case with unconstrained investment,
when $\mu-r>0$ and $\rho>\rho_1$, we have $a_V^*(0+)<0$,
which indicates that it is optimal to short-sell the high return stock
to earn interest at low surplus levels.
We note that this counter-intuitive investment strategy, which
never occurs in the classical 
model without perturbation in the unconstrained case,
shows a special feature of the perturbed model that
investment (buying/short-selling the stock)
is not only for the stock return
but also for neutralizing the perturbation risk.
 We also note that
 this strategy (short-selling the high return
stock to earn interest) can occur when a strong investment 
 constraint on borrowing (money) and buying (stock) is imposed
in the model without perturbation (See, e.g. \cite{BHLT}).
\end{remark}

\section{Analysis of the case of exponential claims}
We now analyze the case when the claim size
 $Y$ has an exponential
distribution with mean $m$. In this case,
we show some new results on asymptotic behaviors for large surplus levels.
For the special case $\mu=r$, from $a_V(x)\equiv-\rho\frac{\sigma_1}{\sigma}$,
 we see that the optimal investment amount is a constant and this case
is addressed in Remark~\ref{rmk-54}.
In the following, we assume $\mu\neq r$.
\subsection{The unconstrained case}
As in \cite{Norshteyn} (the case $\rho =0$),
we first derive an equation for the optimal strategy.
From equation \eqref{newform}, for the case with exponential claim
 distribution function $H\left(y\right)= e^{-ky}$, where $k=1/m$,
we have
\begin{equation}\label{newform3}
(c_\rho+rx){v}(x)+
\frac{1}{2}\sigma_\rho^2{v}'(x)
 -\lambda \int\limits_0^x e^{-ky}v(x-y)dy - \lambda V(0)e^{-kx}
=\gamma \frac{{\left( {{v}(x)} \right)^2 }}{{{v}'(x)}},
\end{equation}
where  $v(x)=V'(x)$, $\gamma$ is given in \eqref{gamma}, $c_\rho$ and $\sigma_\rho$ are given in \eqref{crho}.
 Let $$u(x) = v(x)e^{kx}.$$
It holds $ v'(x)=e^{- kx} (u'\left( x \right)-ku\left(x\right))$.
Then equation \eqref{newform3} can be rewritten as
\begin{equation*}
(c_\rho+rx)u(x)
 + \frac{1}{2}\sigma_\rho^2(u'(x)-ku(x))
-\lambda \int\limits_0^x u\left( y \right)dy -  \lambda V(0)
 = \gamma \frac{{u^2
\left( x \right)}}{u'(x) - ku(x)},
\end{equation*}
and
 $$ {a^*_V(x) =
-\frac{(\mu-r)V'(x)}{\sigma^2V''(x)}-\rho\frac{\sigma_1}{\sigma} =
 -\frac{{\mu  - r}}{{\sigma ^2 }}\frac{{u\left( x
\right)}}{{u'\left( x \right) - ku\left( x
\right)}}-\rho\frac{\sigma_1}{\sigma}}.$$
Hence
\begin{equation*}
\begin{split}
&(c_\rho+rx)u(x) - \frac{1}{2}\sigma _\rho^2
\frac{{\mu-r}}{\left(a^*_V(x)+\rho\frac{\sigma_1}{\sigma}\right)
\sigma ^2}u\left( x \right)
-\lambda \int\limits_0^x u\left( y \right)dy -  \lambda V(0)\\
=& -\gamma \frac{(a^*_V(x)+\rho\frac{\sigma_1}{\sigma})\sigma ^2
}{\mu - r}u(x).
\end{split}
\end{equation*}
Differentiating this equation with respect to $x$, we get
\begin{equation*}
\begin{split}
& ru(x) + (c_\rho+rx)u'(x)-\frac{1}{2}\sigma _\rho^2\frac{{\mu  - r}}
 {{\sigma ^2 }}\frac{{u'(x)(a^*_V(x)+\rho\frac{\sigma_1}{\sigma}) - u\left( x \right)(a^*_V(x))'}}
 {{(a^*_V(x)+\rho\frac{\sigma_1}{\sigma})^2 }}-\lambda u\left( x \right)\\
 = &- \frac{{\mu  - r}}{2}\left[
{u'\left( x \right)(a^*_V(x)+\rho\frac{\sigma_1}{\sigma}) +
  u\left( x \right)(a^*_V(x))'} \right].
\end{split}
\end{equation*}
Divide both sides by $u(x)$, and write
\begin{equation}
\label{tilde_a}
\tilde{a}_V(x)=a^*_V(x)+\rho\frac{\sigma_1}{\sigma}
=-\frac{\mu-r}{\sigma^2}\frac{V'(x)}{V''(x)},
\end{equation}
we have the following equation for $\tilde{a}_V$:
\begin{equation*}
\begin{split}
 &r-\lambda +(c_\rho+rx)\left({k-\frac{{\mu  - r}}{{\sigma ^2\tilde{a}_V(x)}}}\right)
 -\frac{1}{2}\sigma_\rho^2\frac{{\mu  - r}}{{\sigma ^2 }}
 \frac{{k\tilde{a}_V(x) - \frac{{\mu  - r}}{{\sigma ^2 }}
 - \tilde{a}_V'(x)}}{{\tilde{a}_V^2(x) }}\\
 =& -\frac{{\mu  - r}}{2}\left[ {k\tilde{a}_V(x)-\frac{{\mu- r}}{{\sigma ^2 }} + \tilde{a}_V'(x)} \right],
\end{split}
\end{equation*}
and finally,
\begin{equation}
 \label{eq_tilde_a}
\begin{split}
[\sigma ^2 \tilde{a}_V^2(x)+\sigma _\rho^2]\tilde{a}_V'(x)
=&-\frac{\sigma ^2}{m} \tilde{a}_V^3(x)-2\left[r-\lambda+ \frac{c_\rho}{m} -\frac{{(\mu-r)^2 }}{2\sigma ^2} + \frac{r}{m}x \right]
\frac{\sigma ^2}{\mu-r}\tilde{a}_V^2(x)\\
&+2\left(c_{\rho} + rx + \frac{1}{2m}\sigma _\rho^2\right)\tilde{a}_V(x)
  - \sigma _\rho^2 \frac{{\mu  - r}}{{\sigma ^2 }}.
\end{split}
 \end{equation}

In the sequel, we find asymptotic representations of the optimal strategy and the value function at infinity.
It can be shown that equation \eqref{eq_tilde_a} has a family of bounded solutions, each of which is representable in principal in the form of the following asymptotic series for large $x$:
\begin{equation}
\label{series} \tilde a_V(x)\sim\Sigma
_{k=0}^{\infty}\tilde{a}_kx^{-k},
\end{equation}
 where
$$\tilde{a}_0=\frac{(\mu-r)m}{\sigma ^2}, \
\tilde{a}_1=-\left(1-\frac{\lambda}{r}\right)\frac{(\mu-r)m^2}{\sigma ^2}.$$
A short justification of \eqref{series}, for the function $\tilde a_V$ defined in \eqref{tilde_a}, is given in the Appendix.
We then obtain the following property of the optimal strategy:
 \begin{theorem} \label{th51}
It holds
 \begin{equation}
 \label{asymp_a*}
 a^*_V(x)= \frac{(\mu-r)m}{\sigma ^2}-\rho\frac{\sigma_1}{\sigma}-\left(1-\frac{\lambda}{r}\right)\frac{(\mu-r)m^2}{\sigma ^2}\frac{1}{x}(1+o(1)),
\end{equation}
for  $x\rightarrow \infty$.
\end{theorem}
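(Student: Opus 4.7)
The plan is to leverage the ordinary differential equation \eqref{eq_tilde_a} just derived for the auxiliary function $\tilde{a}_V(x)=a^*_V(x)+\rho\sigma_1/\sigma$, and to read off the first two terms of its large-$x$ asymptotic expansion. The conclusion \eqref{asymp_a*} then follows from the definitional identity $a^*_V(x)=\tilde{a}_V(x)-\rho\sigma_1/\sigma$ after plugging in the computed coefficients.

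Concretely, I would insert the ansatz $\tilde{a}_V(x)=\tilde{a}_0+\tilde{a}_1 x^{-1}+O(x^{-2})$ into \eqref{eq_tilde_a} and match powers of $x$. Since $\tilde{a}_V'(x)=-\tilde{a}_1 x^{-2}(1+o(1))$, the left-hand side of \eqref{eq_tilde_a} is $O(x^{-2})$, so on the right-hand side the coefficients of $x^1$ and $x^0$ must each vanish. The $O(x)$ balance comes from the linear-in-$x$ piece $-2(r/m)(\sigma^2/(\mu-r))x\,\tilde{a}_V^2$ on the first line together with $2rx\,\tilde{a}_V$ on the second line; dividing by $x$ and sending $x\to\infty$ gives $-2(r/m)(\sigma^2/(\mu-r))\tilde{a}_0^2+2r\tilde{a}_0=0$, hence $\tilde{a}_0=(\mu-r)m/\sigma^2$ (the spurious root $\tilde{a}_0=0$ is ruled out because $V$ is strictly concave and increasing, so $\tilde{a}_V=-(\mu-r)V'/(\sigma^2V'')$ is bounded away from $0$). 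With $\tilde{a}_0$ in hand, the $O(1)$ balance collects $-(\sigma^2/m)\tilde{a}_0^3$, the $O(1)$ contribution $-2[r-\lambda+c_\rho/m-(\mu-r)^2/(2\sigma^2)](\sigma^2/(\mu-r))\tilde{a}_0^2$, the cross term $-4(r/m)(\sigma^2/(\mu-r))\tilde{a}_0\tilde{a}_1$, together with $2r\tilde{a}_1+2(c_\rho+\sigma_\rho^2/(2m))\tilde{a}_0$ and $-\sigma_\rho^2(\mu-r)/\sigma^2$. Substituting $\tilde{a}_0=(\mu-r)m/\sigma^2$ produces several cancellations (the terms $\pm m^2(\mu-r)^3/\sigma^4$, $\pm 2c_\rho m(\mu-r)/\sigma^2$, and $\pm\sigma_\rho^2(\mu-r)/\sigma^2$ all drop out), leaving $-2r\tilde{a}_1=2(r-\lambda)(\mu-r)m^2/\sigma^2$ and hence $\tilde{a}_1=-(1-\lambda/r)(\mu-r)m^2/\sigma^2$, exactly the values announced before the theorem.

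The real obstacle is not the formal coefficient matching but verifying that the specific bounded solution $\tilde{a}_V$ attached to our value function actually admits the expansion \eqref{series}, rather than merely belonging to the Riccati family generated by \eqref{eq_tilde_a}. The generic member of that family blows up or oscillates as $x\to\infty$; only a distinguished bounded solution, singled out by the matching at $x=0$ together with the global boundedness and concavity of $V$, is compatible with \eqref{series}. I would handle this precisely as pointed to in the Appendix: write $\tilde{a}_V(x)=\tilde{a}_0+\tilde{a}_1/x+R(x)$, linearize \eqref{eq_tilde_a} around the leading approximation, and show by a standard contraction/Gronwall argument that the remainder $R(x)$ satisfies $R(x)=o(1/x)$ for large $x$. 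Once this is in place, the equality of the formal coefficients with those of $\tilde{a}_V$ is forced, and substituting the values of $\tilde{a}_0$ and $\tilde{a}_1$ into $a^*_V(x)=\tilde{a}_V(x)-\rho\sigma_1/\sigma$ yields \eqref{asymp_a*}, completing the proof.
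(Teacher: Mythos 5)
Your formal coefficient matching is correct and reproduces exactly the paper's values of $\tilde a_0$ and $\tilde a_1$, and you rightly identify where the real work lies: showing that the particular function $\tilde a_V$ attached to the value function actually admits the expansion \eqref{series}. But your treatment of that step has a genuine gap. First, the reason you give for discarding the root $\tilde a_0=0$ --- that monotonicity and concavity of $V$ make $\tilde a_V=-(\mu-r)V'/(\sigma^2V'')$ \emph{bounded away from} $0$ --- does not follow: \eqref{incrconc} gives only pointwise positivity (for $\mu>r$), and a positive solution could a priori tend to $0$; indeed $\tilde a=0$ is the second (unstable) stationary point of the limiting autonomous equation and supports its own consistent formal balance at order $1$ (with $\tilde a_0=0$ one gets $\tilde a_1=\sigma_\rho^2(\mu-r)/(2r\sigma^2)$), so formal matching alone cannot exclude it. Second, and more seriously, your proposed contraction/Gronwall estimate for the remainder $R(x)$ presupposes precisely what must be proved, namely that $\tilde a_V(x)$ eventually lies in a small neighborhood of $\tilde a_0$: a linearization around $\tilde a_0+\tilde a_1/x$ controls only solutions that are already close to it. Neither ``matching at $x=0$'' (irrelevant to behavior at infinity) nor boundedness of $V$ (which says nothing about the ratio $V'/V''$) supplies this; and since \eqref{aD} exhibits a one-parameter family of solutions all sharing the series \eqref{series}, the expansion itself does not single out a distinguished solution in the way your argument suggests.

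The paper closes this gap by a different route in the Appendix: the Riccati-type equation \eqref{eq_phi} is asymptotically autonomous after the substitution $y=x^2/2$, with a stable stationary point at $\tilde a_0$ and an unstable one, so every solution either converges to the stable point or diverges to $\pm\infty$; divergence of $\tilde a_V$ is then excluded by an explicit contradiction argument using \eqref{eq_tilde_a} together with the sign of $\tilde a_V$ (for $\mu>r$, if $\tilde a_V\to+\infty$ then dividing \eqref{eq_tilde_a} by $\sigma^2\tilde a_V^2+\sigma_\rho^2$ forces $\tilde a_V'\to-\infty$, hence $\tilde a_V\to-\infty$, a contradiction; symmetrically for $\mu<r$); finally, the Lyapunov-series representation \eqref{aD} shows that every finite-limit solution differs from the series \eqref{series} only by an exponentially small term, which gives \eqref{asymp_a*}. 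To make your proof complete you would need to supply these (or equivalent) arguments --- a proof that $\tilde a_V$ neither blows up nor approaches the unstable point --- before any remainder estimate can get off the ground.
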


Noticing $\left[\ln{V'(x)}\right]'=-\frac{\mu-r}{\sigma^2\tilde{a}_V(x)}$,
 we have
\begin{equation}
\label{Vint}
V'(x)=K\exp{\{-\frac{(\mu-r)}{\sigma ^2}\int _0^x \frac{1}{\tilde{a}_V(y)}dy\}},
\end{equation}
for some $K>0$, wherefrom using $\tilde{a}_V(y)=\tilde{a}_0+\tilde{a}_1\frac{1}{y}(1+o(1))$ for large $y$,
we get
\begin{equation*}
\begin{split}
V'(x)=&K\exp{\{-\int _1^x \frac{1}{m\left[1-m\left(1-\frac{\lambda}{r}\right)\frac{1}{y}(1+o(1))\right]}dy\}}\\
=&K\exp{\{-\int _1^x\left[\frac{1}{m}+\left(1-\frac{\lambda}{r}\right)\frac{1}{y}(1+o(1))\right]dy\}}\\
=&Ke^{-{x}/{m}}x^{{\lambda}/{r}-1}(1+o(1)),
\end{split}
\end{equation*}
for $x\rightarrow \infty$. Thus it holds
 $$V(x)= V(\infty)-Ke^{-{x}/{m}}x^{{\lambda}/{r}-1}(1+o(1)),\ x\rightarrow \infty,$$
 and we have the relation on the value function:

 \begin{theorem}\label{th52} It holds
$$\delta(x)=1-K_1e^{-{x}/{m}}x^{{\lambda}/{r}-1}(1+o(1)),\ x\rightarrow \infty,$$
for some constant $K_1>0$.
\end{theorem}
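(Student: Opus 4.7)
The plan is to derive the claimed asymptotic of $\delta(x)=V(x)/V(\infty)$ from the already-established asymptotic expansion of $\tilde{a}_V$ in \eqref{series} (whose justification is deferred to the Appendix), by converting information about $\tilde{a}_V(x)$ into information about $V'(x)$ via a logarithmic derivative, and then integrating from $x$ to $\infty$. I would organize the argument in three steps.

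First, I would use the definition \eqref{tilde_a} to observe that $(\ln V'(x))' = V''(x)/V'(x) = -(\mu-r)/[\sigma^2\tilde{a}_V(x)]$, so that
$$V'(x) = V'(1)\exp\left\{-\int_1^x \frac{\mu-r}{\sigma^2\,\tilde{a}_V(y)}\,dy\right\}.$$
Second, I would substitute the asymptotic series \eqref{series}, retaining two terms with a genuine remainder, $\tilde{a}_V(y) = \tilde{a}_0 + \tilde{a}_1 y^{-1} + O(y^{-2})$, with $\tilde{a}_0 = (\mu-r)m/\sigma^2$ and $\tilde{a}_1/\tilde{a}_0 = -(1-\lambda/r)m$. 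A direct expansion of the reciprocal yields
$$\frac{\mu-r}{\sigma^2\,\tilde{a}_V(y)} = \frac{1}{m} + \frac{1-\lambda/r}{y} + O(y^{-2}),\quad y\to\infty.$$
Since $\int_1^\infty y^{-2}dy$ is finite, integrating gives $x/m + (1-\lambda/r)\ln x + C + o(1)$, so after exponentiating,
$$V'(x) = K\,e^{-x/m}\,x^{\lambda/r-1}\,(1+o(1)),\quad x\to\infty,$$
for some $K>0$.

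Third, I would recover $\delta(x)$ by writing $V(\infty)-V(x) = \int_x^\infty V'(y)\,dy$. The leading-order reduction is the standard incomplete-gamma estimate $\int_x^\infty e^{-y/m}y^{\lambda/r-1}dy \sim m\,e^{-x/m}x^{\lambda/r-1}$, which I would prove either by a single integration by parts or by the substitution $y=x+mu$ combined with dominated convergence (the bound $e^{-u}(1+mu/x)^{\lambda/r-1}$ is $L^1$-dominated for $x\ge 1$). Writing $V'(y) = K e^{-y/m}y^{\lambda/r-1}(1+\varepsilon(y))$ with $\varepsilon(y)\to 0$, and bounding the error contribution by $\sup_{y\ge x}|\varepsilon(y)|$ times the leading integral, the same asymptotic carries over to $V(\infty)-V(x)$. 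Dividing by $V(\infty)$ then yields
$$1-\delta(x) = K_1\,e^{-x/m}\,x^{\lambda/r-1}(1+o(1)),\quad x\to\infty,$$
with $K_1 = Km/V(\infty)>0$, which is the claim.

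The main obstacle in this scheme is ensuring that the asymptotic series \eqref{series} really does come with an $O(y^{-2})$ remainder rather than the weaker $o(y^{-1})$: with only $o(y^{-1})$, integrating the reciprocal would leave an $o(\ln x)$ error in the exponent, producing a correction factor of the form $x^{o(1)}$ which is strictly weaker than $1+o(1)$. Consequently, the asymptotic analysis of equation \eqref{eq_tilde_a} carried out in the Appendix must supply not merely the leading two coefficients $\tilde{a}_0,\tilde{a}_1$ but also a quantitative control on the tail of the series; once this is in hand, Steps 1--3 above are routine.
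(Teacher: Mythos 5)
Your proposal follows essentially the same route as the paper: both express $V'(x)=K\exp\{-\frac{\mu-r}{\sigma^2}\int^x \tilde a_V(y)^{-1}dy\}$ via the logarithmic derivative, insert the two-term expansion of $\tilde a_V$ from \eqref{series}, and then integrate the tail to pass from $V'$ to $V(\infty)-V(x)$ and hence to $\delta(x)=V(x)/V(\infty)$. Your concern about needing an $O(y^{-2})$ remainder rather than merely $o(y^{-1})$ is well taken, and it is precisely what the Appendix provides, since \eqref{aD} gives a full asymptotic power series (plus an exponentially small correction), so truncation after two terms leaves an integrable error.
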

Write the minimal ruin probability function as:
$$\Psi(x)=1-\delta(x),$$
and we have
\begin{eqnarray}\label{ruinbd}
\Psi(x)=K_1e^{-{x}/{m}}x^{{\lambda}/{r}-1}(1+o(1)),\ x\rightarrow \infty.
\end{eqnarray}

\begin{remark}
We note that the results in \eqref{asymp_a*} and \eqref{ruinbd} also
hold in the model without perturbation $\sigma_1=0$.
From \eqref{asymp_a*}, we see the optimal investment amount has a finite limit.
The rate at which the optimal strategy converges to the limit is also given.
In \eqref{ruinbd},
the limiting expression of the minimal ruin probability function is
a product of an exponential function and a power function as
$x\rightarrow\infty$. We see that the interest rate $r$,
the exponential claim mean $m$ and
the claim occurrence intensity $\lambda$ play key roles in the expression, while the stock parameters $\mu$, $\sigma$,
 the premium rate $c$ and the perturbation parameter $\sigma_1$ are insignificant.
\end{remark}
To compare \eqref{ruinbd} with the existing results
(e.g., exponential bounds or power function approximation
 for the ruin probability function),
we give the following remarks:
\begin{remark}
In Theorem 1 of \cite{kony_FKP}, it is shown if $\frac{2\mu}{\sigma^2}>1$,
$\Psi(x)=Kx^{1-\frac{2\mu}{\sigma^2}}(1+o(1)),\ x\rightarrow\infty,$
 for some constant $K>0$.
This result is obtained under the model without Brownian perturbation. And there is no control of investment, i.e., with all the surplus invested in the risky asset. One extension of this result is given in Theorem 2 of \cite{LL}, where
the surplus process is modeled by the perturbed Cramer-Lundberg process. With a constant proportion of investment, i.e. the investment amount at time $t$ is
 $a_t=\alpha X_t$, for $0<\alpha\leq 1$, the ruin probability takes the form:
$$\Psi(x)=Kx^{1-\frac{2[\mu\alpha+r(1-\alpha)]}{\alpha^2\sigma^2}}(1+o(1)),
\ x\rightarrow\infty,$$
for some constant $K$ and $\frac{2[\mu\alpha+r(1-\alpha)]}{\alpha^2\sigma^2}>1$.
 In these models, investment amount in the risky asset tends to infinity as
surplus tends to infinity.
Consequently, when the surplus level is large,
 the stock volatility and stock growth
are major parameters that affect the ruin probability,
but not the exponential mean, claim occurrence intensity and
 premium rate.
\end{remark}
\begin{remark}
In Theorem 4.1 of \cite{GaierGS}, it is shown $\Psi(x)\leq e^{-Rx}$ where $R\in(0,1/m)$ solves the equation
$$\lambda\left(\frac{1}{1-mR}-1\right)=cR+\frac{\mu^2}{2\sigma^2}.$$
We note that the surplus process there is a special case
of the jump-diffusion process in this paper with $r=0$ and $\sigma_1=0$,
 and that he bound is obtained by using a constant investment policy $\pi$ with amount $\frac{\mu}{R\sigma^2}$
under which the process $\{e^{-RX^\pi_t}\}_{t\ge 0}$ is a martingale.
In Theorem 4.2 of \cite{LIN} with Brownian perturbation, it is shown $\Psi(x)\leq e^{-Rx}$, where
$R\in(0,1/m)$ solves the equation
$$\lambda\left(\frac{1}{1-mR}-1\right)
=\left(c-\rho\sigma_1\frac{\mu}{\sigma}\right)R
-\frac{\sigma_1^2}{2}(1-\rho^2)R^2
+\frac{\mu^2}{2\sigma^2}.$$
The discounted surplus process of \cite{LIN}
is a jump-diffusion process in a slight different form of ours with $r=0$.
The exponential bound can be obtained
using a constant investment strategy $\pi$ of amount
$\frac{\mu}{R\sigma^2}-\rho\frac{\sigma_1}{\sigma}$ in our model with $r=0$,
and the process $\{e^{-RX^\pi_t}\}_{t\ge 0}$ is a martingale.
\end{remark}

\subsection{The constrained case}
First assume that function $V(x)$ satisfies
 equation \eqref{HJBA} for large $x$. This equation has the form
\begin{equation}
\label{linearIDE}
  \frac{1}{2}(\sigma ^2 A^2+2\rho \sigma \sigma _1 A+\sigma _1^2)V''(x)+ \left[ {c + (\mu  - r)A + rx} \right]V'(x) - M(V)(x) = 0.
\end{equation}
 In the case
of exponential claims, recall $k=1/m$,
 and equation \eqref{linearIDE} can be rewritten as
\begin{equation} \label{explinear}
\begin{split}
&\frac{1}{2}[ \sigma ^2 A^2 +2\rho \sigma \sigma _1 A+\sigma _1^2]V ''(x)+\left[ {c + (\mu  - r)A + rx} \right]V '(x)\\
+&k\lambda\int\limits_0^x {V (x - y)\exp ( - k y)dy} - \lambda V (x) = 0.
\end{split}
\end{equation}
Denote
 $$
g(x): = \int\limits_0^x {V(x - y)\exp ( - ky)dy}.
$$
It is easy to see that
\begin{equation}
\label{g} g'(x) = V (x) - k g(x).
\end{equation}
 If $V$ satisfies \eqref{explinear}, then it satisfies the following equation:
\begin{equation}
\label{eqG}
G'(x) + kG(x) = 0,
\end{equation}
where
\begin{equation} \label{G}
\begin{split}
G(x)=&\frac{1}{2}[ \sigma ^2 A^2 +2\rho \sigma \sigma _1 A+\sigma _1^2]V ''(x)+\left[ {c + (\mu  - r)A + rx} \right]V '(x)\\
+&k\lambda\int\limits_0^x {V (x - y)\exp ( - k y)dy} - \lambda V (x),
\end{split}
\end{equation}
which is the left-hand side of equation
\eqref{explinear}. Then in view of \eqref{g}, equation
\eqref{eqG} can be rewritten as an ordinary differential equation
(ODE) of the 3-rd order:

\begin{equation}\label{ODE}
\begin{split}
0=& V '''(x) + \left[
{\frac{{2c}}{{A_{\rho}^2 }} + \frac{{2(\mu  - r)A}}{{A_{\rho}^2 }}
+ k  + \frac{{2r}}{{A_{\rho}^2 }}x} \right]V ''(x) \\
 &+ \left[ {\frac{{2((r - \lambda ) + k c + k A(\mu  - r))}}{{A_{\rho}^2 }}
 + \frac{{2rk }}{{A_{\rho}^2 }}x} \right]V '(x),
\end{split}
\end{equation}
where
\begin{equation}
\label{Arho}
A_{\rho}^2=\sigma ^2 A^2 +2\rho \sigma \sigma _1 A+\sigma
_1^2.
\end{equation}
Put:
\begin{equation}
\label{a1a2}
a_1  = \frac{{2c}}{{A_{\rho}^2 }} + \frac{{2(\mu  -
r)A}}{{A_{\rho}^2}} + k, \quad  a_2  = \frac{{2r}}{{A_{\rho}^2 }},
\end{equation}
and
\begin{equation}
\label{a3a4}
a_3  = \frac{{2[(r - \lambda ) + k c + k A(\mu  -
r)]}}{{A_{\rho}^2 }}, \quad a_4  = \frac{{2rk }}{{A_{\rho}^2}},
\end{equation}
then the ODE \eqref{ODE} takes the form
\begin{equation}
\label{ODE2}
 \phi '' + (a_2 x + a_1 )\phi ' + (a_4 x + a_3 )\phi  = 0,
\end{equation}
where $ \phi = V'. $ We set
$$ y_1  = \phi, \  y_2  =\phi'.$$
Then
 $$ y'_1  = y_2, \quad y'_2  =  - (a_2 x + a_1 )y_2  -
(a_4 x + a_3 )y_1. $$
Therefore, we obtain the equation in the following matrix form:
\begin{equation}
\label{system}
 y' = (A_1  + A_0 x)y,
\end{equation}
where $ y = (y_1 ,y_2 )^T$, and
 $$ A_0  = \left(
{\begin{array}{*{20}c}
   0 & 0  \\
   { - a_4 } & { - a_2 }  \\
\end{array}} \right),\
A_1  = \left( {\begin{array}{*{20}c}
   0 & 1  \\
   { - a_3 } & { - a_1 }  \\
\end{array}} \right).$$ Rewrite equation \eqref{system} in the form
\begin{equation}
\label{system1}
 x^{ - 1} y' = (A_0 + \frac{{A_1 }}{x})y .
 \end{equation}
This system has an irregular singular point at infinity of the 2-nd
rang (see \cite{Wasov}). Since the matrix $A_0$  has the
eigenvalue zero, then to obtain a principal term of asymptotic
behavior of the solution at infinity, we must find the correction
to the zero eigenvalue by perturbation theory up to $ O(1/x^3 ).$
To do this, we use the method of asymptotic diagonalization for
systems of linear ODE (See \cite{KonNB} and the references therein).
First, we find a diagonalizator of matrix $ A_0$, i.e.
a matrix $D$ such that
$$D^{ - 1} A_0 D = \tilde A_0,$$ where $\tilde
A_0$ is a diagonal matrix. It is easy to show that
$$ D = \left(
{\begin{array}{*{20}c}
   1 & 0  \\
   { - a_4 /a_2 } & 1  \\
\end{array}} \right) = \left( {\begin{array}{*{20}c}
   1 & 0  \\
   { - k } & 1  \\
\end{array}} \right),\
\tilde A_0  = \left( {\begin{array}{*{20}c}
   0 & 0  \\
   0 & { - a_2 }  \\
\end{array}} \right).
$$
Next introduce a change of variables
\begin{equation} \label{z}
 y = D\left( {E + \frac{{N_1
}}{x} + \frac{{N_2 }}{{x^2 }}} \right)z,
\end{equation}
where $z=(z_1,z_2)^T$, $E$ is the $2\times 2$ identity matrix, and
 $ N_1$, $ N_2$ are some $2\times 2$ matrices to be determined
below. Differentiating equation \eqref{z} in $x$, we have
$$ y' =D\left( {E + \frac{{N_1 }}{x} + \frac{{N_2 }}{{x^2 }}} \right)z' -
D\left( {\frac{{N_1 }}{{x^2 }} + \frac{{2N_2 }}{{x^3 }}} \right)z,$$
and we get from \eqref{system1} the equation
\begin{equation}
\label{z'}
x^{ - 1} z' = \left( {E + \frac{{N_1 }}{x} + \frac{{N_2 }}{{x^2
}}} \right)^{ - 1} \left[ {\left( {\tilde A_0  + \frac{{\tilde A_1
}}{x}} \right)\left( {E + \frac{{N_1 }}{x} + \frac{{N_2 }}{{x^2
}}} \right) + O\left( {\frac{1}{{x^3 }}} \right)}
\right]z,
\end{equation}
where
\begin{equation*}
\tilde A_1  = D^{ - 1} A_1 D = \left( {\begin{array}{*{20}c}
   { - k } & 1  \\
   { - a_3  - k \left( {k  - a_1 } \right)} & {k  - a_1 }  \\
\end{array}} \right).
\end{equation*}
Choose matrices $N_1$, $N_2$  in such a way that the equation assumes the form
\begin{equation}
\label{diagonal}
x^{ - 1} z' = \left( {\tilde A_0  + \frac{{\tilde{\tilde{A}}_1 }}{x} + \frac{{\tilde A_2 }}{{x^2 }} + O\left( {\frac{1}{{x^3 }}} \right)} \right)z,
\end{equation}
where  ${\tilde{\tilde{A}}_1 }$ and ${\tilde A_2 }$ are some diagonal matrices.
We let
$$
{\tilde{\tilde{A}}_1 }  = \left( {\begin{array}{*{20}c}
   { - k } & 0  \\
   0 & {k - a_1 }  \\
\end{array}} \right),
$$
(diagonal elements of ${\tilde{\tilde{A}}_1 }$ are the same as those in the matrix $\tilde A_1$) and we determine ${\tilde A_2 }$ below.
Equating the right sides of \eqref{z'} and \eqref{diagonal} we have
$$\left( {\tilde A_0  + \frac{{\tilde A_1 }}{x}} \right)\left( {E + \frac{{N_1 }}{x} + \frac{{N_2 }}{{x^2 }}} \right) + O\left( {\frac{1}{{x^3 }}} \right) = \left( {E + \frac{{N_1 }}{x} + \frac{{N_2 }}{{x^2 }}} \right)\left( {\tilde A_0  + \frac{{\tilde{\tilde{A}}_1 }}{x} + \frac{{\tilde A_2 }}{{x^2 }}} \right).$$
Equating the coefficients of $x^{-1}$ we obtain
$$\tilde A_0 N_1  + \tilde A_1  = {\tilde{\tilde{A}}_1 }  + N_1 \tilde A_0,$$
which yields  $$
N_1  = \left( {\begin{array}{*{20}c}
   0 & { - {1}/{{a_2 }}}  \\
   {{{ - [a_3  + k \left( {k  - a_1 } \right)]}}/{{a_2 }}} & 0  \\
\end{array}} \right).
$$
Equating now the coefficients of $x^{-2}$ we get
$$\tilde A_0 N_2  + \tilde A_1 N_1  = \tilde A_2  + N_1  {\tilde{\tilde{A}}_1 } + N_2 \tilde A_0,$$
and \begin{equation*}
\begin{split}
&N_2  = \left( {\begin{array}{*{20}c}
   0 & { - \left( {2k - a_1 } \right)/{{a_2 ^2 }}}  \\
   {{{[a_3+k(k-a_1) ]}}\left( {a_1-2k } \right)/{{a_2^2 }}} & 0  \\
\end{array}} \right),\\
&\tilde A_2  = \left( {\begin{array}{*{20}c}
   {{\lambda }/{r} - 1} & 0  \\
   0 & {1 - {\lambda }/{r}}  \\
\end{array}} \right).
\end{split}
\end{equation*}
The system \eqref{diagonal} is  asymptotically equivalent to the following system (see \cite{Bellman}):
\begin{equation}
\label{diagonal2}
\tilde z' = x\left( {\tilde A_0  + \frac{{\tilde{\tilde{A}}_1 }}{x} + \frac{{\tilde A_2 }}{{x^2 }} } \right)\tilde z,
\end{equation}
where $\tilde z=(\tilde z_1, \tilde z_2)^T$, which is separated into two independent equations:
\begin{equation*}
\begin{split}
&\tilde z_1 ^\prime   = \left( { - k  + \frac{{{\lambda }/{r} - 1}}{x}} \right)\tilde z_1, \\
&\tilde z_2 ^\prime   = \left( { - \frac{{2r}}{{A_{\rho}^2 }}x -
{\frac{{2c}}{{A_{\rho}^2 }} - \frac{{2(\mu  - r)A}}{{A_{\rho}^2
}}}  + \frac{{1 - {\lambda }/{r}}}{x}} \right)\tilde z_2.
\end{split}
\end{equation*}
For $x\rightarrow\infty$,  the solutions of these equations have the following form:
\begin{equation*}
\begin{split}
&\tilde z_1  = C_1 e^{ - k \,x} x^{{\lambda }/{r} - 1} (1 + o(1)),\\
&\tilde z_2  = C_2 {\mathop{\rm e}\nolimits} ^{ -
{\frac{r}{{A_{\rho}^2 }}x^2  - \frac{{2(c + (\mu  -
r)A)}}{{A_{\rho}^2 }}x}} x^{1 - {\lambda }/{r}} (1 + o(1)),
\end{split}
\end{equation*}
for some constants $C_1$ and $C_2$. The same representations are true for the solution of \eqref{diagonal}. Notice
\begin{equation*}
\begin{split}
&y_1  = z_1  + \left( { - \frac{{A_{\rho}^2 }}{{2rx}} + \frac{{l_2 }}{{x^2 }}} \right)z_2, \\
&y_2  = \left( { - k  - \frac{{r - \lambda }}{{rx}} + \frac{{l_3 }}{{x^2 }}} \right)z_1  + \left[ {1 + k\left( {\frac{{A_{\rho}^2}}{{2rx}} - \frac{{l_2 }}{{x^2 }}}\right)} \right]z_2,
\end{split}
\end{equation*}
where $l_2$, $l_3$ are the elements of matrix $N_2$:
$$l_2=- \frac{1}{{a_2 ^2 }}\left( {2k - a_1 } \right), \ l_3=\frac{{r-\lambda }}{{ra_2 }}\left( {a_1-2k } \right).$$
 Therefore, considering the above notation for nondecreasing function $V$ satisfying (\ref{HJBA}) for large $x$,
we conclude
\begin{equation}
\label{V'infty} V'(x) = C_1 e^{ - x/m} x^{{\lambda }/{r} - 1} (1 +
o(1)), \quad x\rightarrow\infty,
\end{equation}
and
$$V''(x)= \left( { - \frac{1}{m}  - \frac{{r - \lambda }}{{rx}} + \frac{{l_3 }}{{x^2 }}} \right)C_1 e^{ - x/m} x^{{\lambda }/{r} - 1} (1 + o(1)),\ \  x\rightarrow\infty,$$
where $C_1>0$.
Thus $$a_V(x)= \frac{(\mu-r)m}{\sigma ^2}-\rho\frac{\sigma_1}{\sigma}-\left(1-\frac{\lambda}{r}\right)\frac{(\mu-r)m^2}{\sigma ^2}\frac{1}{x}(1+o(1)),\ \  x\rightarrow\infty.$$

Note that the same conclusions are true for the case $A=0$,
 if we consider the corresponding values of $a_i$, $i=1,...,4$, from \eqref{a1a2}, \eqref{a3a4} and $A_{\rho}$ from \eqref{Arho} (in particular,  $A_{\rho}=\sigma _1$ in this case).

Write
$$\rho_3=\frac{m(\mu-r)}{\sigma\sigma_1},\ \rho_4=\rho_3-\frac{A\sigma}{\sigma_1}.$$

If $A<\frac{(\mu-r)m}{\sigma ^2}-\rho\frac{\sigma_1}{\sigma}$,
or $\rho<\rho_4$, we have $V''(x)<0$ and $a_V(x)> A$ for large $x$, thus
$a^*_V(x)=A$ and $V$ solves \eqref{HJBA}.

If $\rho_4<\rho<\rho_3$ (i.e., $0<\frac{(\mu-r)m}{\sigma ^2}-\rho\frac{\sigma_1}{\sigma}<A$),
 from the asymptotic representation \eqref{asymp_a*} for optimal strategy at
large values of the surplus in the unconstrained case,
we see $V''(x)<0$ and $0<a_V(x)<A$ for large values of the surplus;
then the optimizer is $a^*_V(x)=a_V(x)$, where
$V$ solves \eqref{HJBav}.

If $\rho>\rho_3$ ($\frac{(\mu-r)m}{\sigma ^2}-\rho\frac{\sigma_1}{\sigma}<0$),
we have $V''(x)<0$ and $a_V(x)<0$ for large $x$, thus
$a^*_V(x)=0$ and $V$ solves \eqref{HJB0}.

If $\rho=\rho_4$ (i.e., $A=\frac{(\mu-r)m}{\sigma ^2}-\rho\frac{\sigma_1}{\sigma}$), then $V''(x)<0$, $a_V(x)\geq A$ and  $a^*_V(x)=A$ for large $x$ if $(\lambda-r)$ is positive, or  $0<a_V(x)<A$ and $a^*_V(x)=a_V(x)$ if $(\lambda -r)$  is negative. Then $V$ solves \eqref{HJBA} or \eqref{HJBav} respectively. If $(\lambda-r)=0$, one of these cases takes place depending on others parameters and we omit further discussions.

If $\rho=\rho_3$ ($\frac{(\mu-r)m}{\sigma ^2}-\rho\frac{\sigma_1}{\sigma}=0$),
we have $V''(x)<0$,  $0<a_V(x)<A$ and  $a^*_V(x)=a_V(x)$ for large $x$ if $(\lambda -r)$ is positive, or  $a_V(x)\geq 0$ and $a^*_V(x)=0$ if
$(\lambda -r)$  is negative. Then $V$ solves \eqref{HJBav} or \eqref{HJB0} respectively. If $(\lambda-r)=0$, one of these cases takes place depending on others parameters.

We have the following theorems on the optimal strategy and the value function:
\begin{theorem}\label{th53}
For large $x$, it holds
\begin{eqnarray}\label{astar-infty}
a^*_V(x)=\begin{cases} \left[\frac{(\mu-r)m}{\sigma
^2}-\rho\frac{\sigma_1}{\sigma}\right](1+O(\frac{1}{x})),
\;  &\ A\geq \frac{(\mu-r)m}{\sigma ^2}-\rho\frac{\sigma_1}{\sigma}\geq 0; \\
\qquad \qquad 0, &\ \frac{(\mu-r)m}{\sigma ^2}-\rho\frac{\sigma_1}{\sigma}<0;\\
\qquad \qquad A, &\ A<\frac{(\mu-r)m}{\sigma ^2}-\rho\frac{\sigma_1}{\sigma}.
\end{cases}
\end{eqnarray}
Moreover, if $A > \frac{(\mu-r)m}{\sigma ^2}-\rho\frac{\sigma_1}{\sigma} > 0$ in the first case, more exact relation \eqref{asymp_a*} is fulfilled.
\end{theorem}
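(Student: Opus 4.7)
The plan is to combine the case-by-case analysis already carried out immediately before the theorem statement with the asymptotic information on $V'$ and $V''$ available in each regime, and then translate that into the three alternatives for $a^*_V(x)$ via the decision rule \eqref{astar-1}.

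First I would treat the boundary (linear) regimes, i.e.\ the cases $\rho<\rho_4$ and $\rho>\rho_3$. In either case one hypothesizes that $V$ satisfies the linear IDE \eqref{HJBA} (with $a\equiv A$) or $\mathcal L^0V=0$ (with $a\equiv 0$) for large $x$, and one feeds the exponential-claim structure through the reduction to the third-order ODE \eqref{ODE} and its asymptotic diagonalization, producing the representations of $V'(x)$ and $V''(x)$ at infinity already displayed in \eqref{V'infty} and the line below it (the $a=0$ case is formally the same argument with $A$ replaced by $0$ in \eqref{Arho}–\eqref{a3a4}). Substituting these into the formula $a_V(x)=-(\mu-r)V'(x)/(\sigma^2V''(x))-\rho\sigma_1/\sigma$ yields $a_V(x)=(\mu-r)m/\sigma^2-\rho\sigma_1/\sigma+O(1/x)$ with $V''(x)<0$ for large $x$. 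A consistency check via \eqref{astar-1} then gives: if $A<(\mu-r)m/\sigma^2-\rho\sigma_1/\sigma$ (equivalently $\rho<\rho_4$) one has $a_V(x)>A$, so the constrained maximizer is $A$, confirming the hypothesis; if $(\mu-r)m/\sigma^2-\rho\sigma_1/\sigma<0$ (equivalently $\rho>\rho_3$) one has $a_V(x)<0$, so the constrained maximizer is $0$.

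Next I would handle the interior regime $\rho_4<\rho<\rho_3$, i.e.\ $0<(\mu-r)m/\sigma^2-\rho\sigma_1/\sigma<A$. Here $V$ satisfies the unconstrained nonlinear equation \eqref{HJBav} for large $x$, so Theorem \ref{th51} applies directly and gives
\[
a^*_V(x)=a_V(x)=\frac{(\mu-r)m}{\sigma^2}-\rho\frac{\sigma_1}{\sigma}-\Bigl(1-\frac{\lambda}{r}\Bigr)\frac{(\mu-r)m^2}{\sigma^2}\frac{1}{x}(1+o(1)),
\]
which is in $(0,A)$ for $x$ large. Combined with the two linear regimes, this produces the three branches in \eqref{astar-infty}, and in fact the more exact relation \eqref{asymp_a*} holds whenever the limit lies strictly inside $(0,A)$ or, by the linear-regime analysis above, strictly on the interior side of $A$, i.e.\ $A>(\mu-r)m/\sigma^2-\rho\sigma_1/\sigma>0$.

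The main obstacle is the bookkeeping at the threshold values $\rho=\rho_3$ and $\rho=\rho_4$, where the first-order limit of $a_V(x)$ coincides with a constraint boundary and the decision between the linear and nonlinear regimes is settled by the $O(1/x)$ correction term. Here the sign of $\lambda-r$ (which is precisely the coefficient of $1/x$ in the expansion of $a_V$, up to positive factors) dictates which of \eqref{HJBA}, \eqref{HJBav}, or \eqref{HJB0} is eventually active: a positive (resp.\ negative) $\lambda-r$ pushes $a_V(x)$ above (resp.\ below) the boundary for large $x$, giving exactly the sub-case discussion already presented in the paragraphs preceding the theorem. The case $\lambda=r$ degenerates and is excluded as noted. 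Once these thresholds are pinned down, collecting the three regimes yields \eqref{astar-infty}, with the sharper asymptotic \eqref{asymp_a*} valid whenever the inequalities in the first branch are strict.
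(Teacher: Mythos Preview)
Your proposal is correct and follows essentially the same approach as the paper: the proof of Theorem~\ref{th53} is precisely the case-by-case discussion in Section~5.2 preceding the theorem statement, combining the asymptotic diagonalization of the linear ODE \eqref{ODE} in the boundary regimes with the unconstrained result (Theorem~\ref{th51}) in the interior regime, and then verifying consistency of the constrained maximizer via \eqref{astar-1}. Your handling of the threshold cases $\rho=\rho_3,\rho_4$ via the sign of $\lambda-r$ also matches the paper's treatment.
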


\begin{theorem}\label{th54}
 It holds $$
\delta (x) = 1-K_2 e^{ - x/m} x^{{\lambda }/{r} - 1} (1 + o(1)), \
x\rightarrow\infty,$$
for some constant $K_2=K_2(A)>0$.
\end{theorem}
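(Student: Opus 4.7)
The plan is to combine the case analysis of Theorem~\ref{th53} with the asymptotic representations of $V'$ that have already been derived, and then integrate from $x$ to $\infty$ to transfer the asymptotics from $V'$ to $V$ (and hence to $\delta$).

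\textbf{Step 1: Reduce to the three regimes.} By Theorem~\ref{th53}, for sufficiently large $x$, exactly one of the following holds: (i) $a^*_V(x)=A$ and $V$ satisfies \eqref{HJBA}; (ii) $a^*_V(x)=a_V(x)\in(0,A)$ and $V$ satisfies \eqref{HJBav} (which is the same equation as in the unconstrained case \eqref{HJBnonc}); or (iii) $a^*_V(x)=0$ and $V$ satisfies \eqref{HJB0}. It therefore suffices to obtain the same asymptotic form for $V'(x)$ in each of these three regimes.

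\textbf{Step 2: Obtain $V'(x) = C\, e^{-x/m} x^{\lambda/r - 1}(1+o(1))$ in each regime.} In regime (i), this is exactly \eqref{V'infty}, which the paper has just derived by the asymptotic-diagonalization analysis of the ODE \eqref{ODE}. In regime (iii), the same derivation applies verbatim with $A$ set to $0$ and $A_\rho=\sigma_1$ in \eqref{Arho}; the coefficients $a_i$ from \eqref{a1a2}--\eqref{a3a4} change numerically but the diagonalized system \eqref{diagonal2} yields the same exponent $-k=-1/m$ and the same power $\lambda/r-1$ in the decaying component. In regime (ii), $V$ solves the same equation as in the unconstrained problem, so the analysis leading to Theorem~\ref{th52} (equation \eqref{Vint} with $\tilde a_V(y) = \tilde a_0 + \tilde a_1/y + o(1/y)$) gives $V'(x)=K\,e^{-x/m} x^{\lambda/r-1}(1+o(1))$. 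In every regime the decaying (as opposed to the exponentially growing) mode must be selected because $V$ is bounded (by Theorem~3.1 and the verification lemma in Section 4); the growing companion solution would violate boundedness.

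\textbf{Step 3: Integrate from $x$ to $\infty$.} Since $V(\infty)=\int_0^\infty v(y)\,dy<\infty$, we may write $V(\infty)-V(x)=\int_x^\infty V'(y)\,dy$. Using the standard incomplete-gamma asymptotic
\[
\int_x^\infty e^{-y/m} y^{\lambda/r - 1}\,dy = m\, e^{-x/m} x^{\lambda/r - 1}(1+o(1)), \quad x\to\infty,
\]
(via integration by parts or substitution $y=x+u$) together with the representation in Step 2, we obtain
\[
V(\infty) - V(x) = C m\, e^{-x/m} x^{\lambda/r - 1}(1+o(1)), \quad x\to\infty,
\]
with the same $C>0$ as in Step 2 (depending on the regime, and hence on $A$).

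\textbf{Step 4: Pass to $\delta$.} Dividing by $V(\infty)>0$ and using $\delta(x)=V(x)/V(\infty)$ gives the claim with $K_2(A) = C(A)\, m/V(\infty) > 0$.

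The main obstacle is Step 2 in the interior regime (ii): one must verify that for the HJB solution $V$ in the constrained problem, once $a_V(x)$ has entered the interior $(0,A)$ for all large $x$, the equation satisfied is literally \eqref{HJBnonc}, so that the unconstrained asymptotic analysis (and in particular the expansion \eqref{series} for $\tilde a_V$ justified in the Appendix) applies without modification. Once this identification is made, the remaining steps are routine ODE-asymptotics and incomplete-gamma estimates.
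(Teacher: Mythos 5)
Your proposal is correct and follows essentially the same route as the paper: the discussion preceding Theorem~\ref{th53} fixes which of the three HJB regimes holds for large $x$, the asymptotics $V'(x)=C\,e^{-x/m}x^{\lambda/r-1}(1+o(1))$ come from \eqref{V'infty} (and its $A=0$ analogue) in the boundary regimes and from the unconstrained analysis behind Theorem~\ref{th52} in the interior regime, and Theorem~\ref{th54} then follows by integrating over $(x,\infty)$ exactly as in your Steps 3--4. The one slip is your appeal to excluding a ``growing companion solution'': in the diagonalized system \eqref{diagonal2} the second mode decays like $\exp\bigl(-\tfrac{r}{A_\rho^2}x^2\bigr)$, so no growing mode exists and boundedness does no work there; what is really needed is that the coefficient $C_1$ of the $e^{-x/m}x^{\lambda/r-1}$ mode is strictly positive, which the paper itself simply asserts for the nondecreasing solution rather than proves.
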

\begin{remark}\label{rmk-54}
From the analysis in this section,
we see that under any investment strategy with a fixed amount invested in
the risky asset, the survival probability function
has a limiting expression with the same principal term (product
of exponential and power functions) as in Theorem~\ref{th54}.
We note that the results in Theorems~\ref{th53} and \ref{th54}
remain valid in the model without perturbation ($\sigma_1=0$).
We also note that in the case without risky investment and perturbation
 ($a_s\equiv 0$ and $\sigma_1=0$), the ruin probability function has the following form (See, e.g. \cite{PG} and \cite{SG}):
$$\Psi(x)=\frac{\int_x^\infty e^{-u/m}\big(1+ru/c\big)^{\lambda/r-1}du}
{c/\lambda+\int_0^\infty e^{-u/m}\big(1+ru/c\big)^{\lambda/r-1}du},$$
which implies $\Psi(x) = K e^{ - x/m} x^{{\lambda}/{r}-1}(1 + o(1))$, $x\rightarrow\infty$, for some $K>0$.
\end{remark}

\section{Numerical Examples and Conclusions}
In this section, we give two numerical examples and a few concluding remarks.
In the examples, we consider the case of unconstrained investment.
Computations using the asymptotic results and the operator \eqref{vLv}
are conducted for various claim distributions.
\begin{example}In this example, the parameters are given by the following:
$\mu=0.42$,
$r=0.32$,
$c=0.36$,
$\lambda=0.3$,
$\rho=-0.2$,
$\sigma=0.1$,
$\sigma_1=0.2$, and $k=1$.
We give calculations for cases with
exponential, half-normal and log-normal claim distributions.
For all cases of different distributions, we have
 the following asymptotic
result for low surplus levels
$$a_V^*(x)\approx 0.8542115-0.02039470x(1+o(1)),\ \ x\rightarrow 0,$$
using \eqref{astar0} and Theorem~\ref{thm42}.
The exponential claim distribution has mean $1$ with tail probability function
 $H(x)=e^{-x}$.
The half-normal claim
distribution has density and tail-probability functions given below:
$$f(x)=\frac{1}{v\sqrt{\pi/2}}e^{-\frac{x^2}{2v^2}},\ \
H(x)=2\left[1-\Phi\left(\frac{x}{v}\right)\right],\ \ x>0,$$
where $\Phi$ is the standard normal distribution function and $v$ is a parameter. We set $v=\sqrt{\pi/2}$ and then
 the mean of the distribution is $v\sqrt{2/\pi}=1$.
The log-normal distribution has density and tail probability functions
$$f(x)=\frac{1}{\sqrt{2\pi}vx}e^{-\frac{(lnx-u)^2}{2v^2}},\ \ H(x)=1-\Phi\left(\frac{\log(x)-u}{v}\right), \ \ x>0,$$
with parameters $v>0$ and $u\in(-\infty,\infty)$. We set $v=1$ and $u=-0.5$
and hence the mean is $e^{u+v^2/2}=1$.
For these claim distributions,
the optimal investment controls calculated using \eqref{aw} and the operator
\eqref{vLv} numerically are given in Figures~\ref{eg1-0} and ~\ref{eg1-infty}.
With the given exponential claim distribution,
it holds the following asymptotic result for large surplus levels:
 $$a_V^*(x)\approx 10.4-0.625\frac{1}{x}(1+o(1)),\ \ x\rightarrow \infty,$$
using \eqref{asymp_a*}.
\end{example}
\begin{example}In this example, the parameters are given by the following:
$\mu=0.2$,
$r=0.12$,
$c=0.5$,
$\lambda=0.3$,
$\rho=0.15$,
$\sigma=0.9$,
$\sigma_1=0.5$, and $k=0.5$.
We give calculations for cases with
exponential, Weibull and Pareto claim distributions.
For all cases of different distributions, we have the following asymptotic
result for low surplus levels
$$a_V^*(x)\approx -0.05274736+0.01112835x(1+o(1)),\ \ x\rightarrow 0,$$
using \eqref{astar0} and Theorem~\ref{thm42}.
The exponential distribution has mean $2$ and $H(x)=e^{-0.5x}$.
The Weibull distribution has density and tail-probability functions
$$f(x)=\frac{v}{u}\left(\frac{x}{u}\right)^{v-1}e^{-\left(\frac{x}{u}\right)^v},
\ \ H(x)=e^{-\left(\frac{x}{u}\right)^v},\ \ x>0,$$
where $u$ and $v$ are parameters. We set $u=1$, $v=1/2$. So the mean
of the distribution is $u\Gamma(1+1/v)=2$. The Pareto claim distribution
has density and tail probability functions
$$f(x)=\frac{vu^v}{(u+x)^{v+1}}, \ \ H(x)=\left(\frac{u}{u+x}\right)^v,\ \ x>0,$$
where $u$ and $v$ are parameters.
We set $u=2$, $v=2$; so the mean is $u/(v-1)=2$.
For these claim distributions,
the optimal investment controls calculated using \eqref{aw} and the operator
\eqref{vLv} numerically are given in Figures~\ref{eg2-0} and ~\ref{eg2-infty}
(We note that the optimal investment strategies in this example barely show a
 difference in Figure~\ref{eg2-0} in the two cases with the exponential and Pareto claim distributions).
In the case with the exponential claim distribution,
we have the following asymptotic result for large surplus levels
 $$a_V^*(x)\approx 0.163580+0.740741\frac{1}{x}(1+o(1)),
 \ \ x\rightarrow\infty,$$
using \eqref{asymp_a*}.
\end{example}

\begin{figure}[ht]
\centering
\begin{minipage}[b]{0.8\linewidth}
\includegraphics[width=1\textwidth]{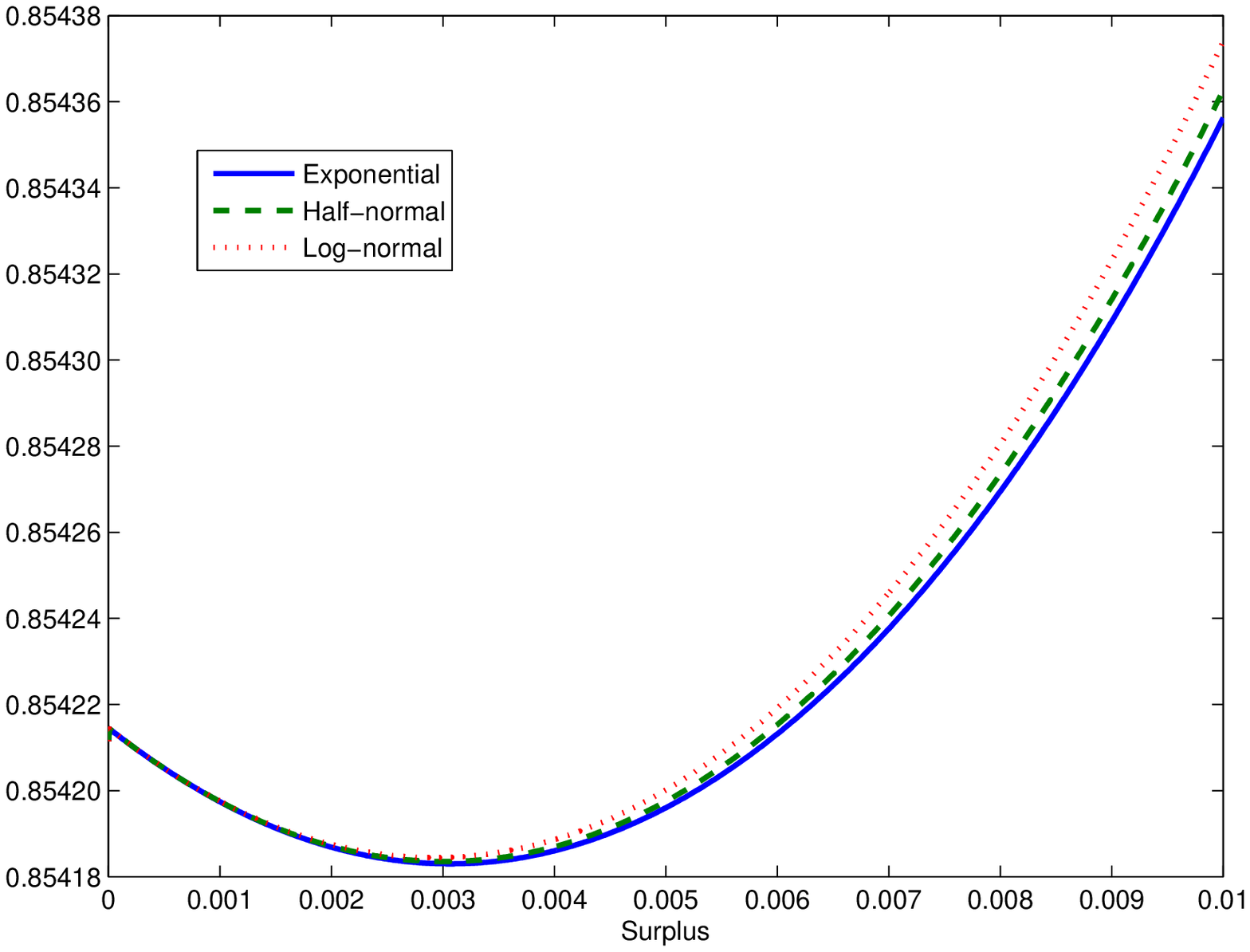}
                \caption{Optimal investment at low surplus - Example 1}
                \label{eg1-0}
\end{minipage}
\begin{minipage}[b]{0.8\linewidth}
\includegraphics[width=1\textwidth]{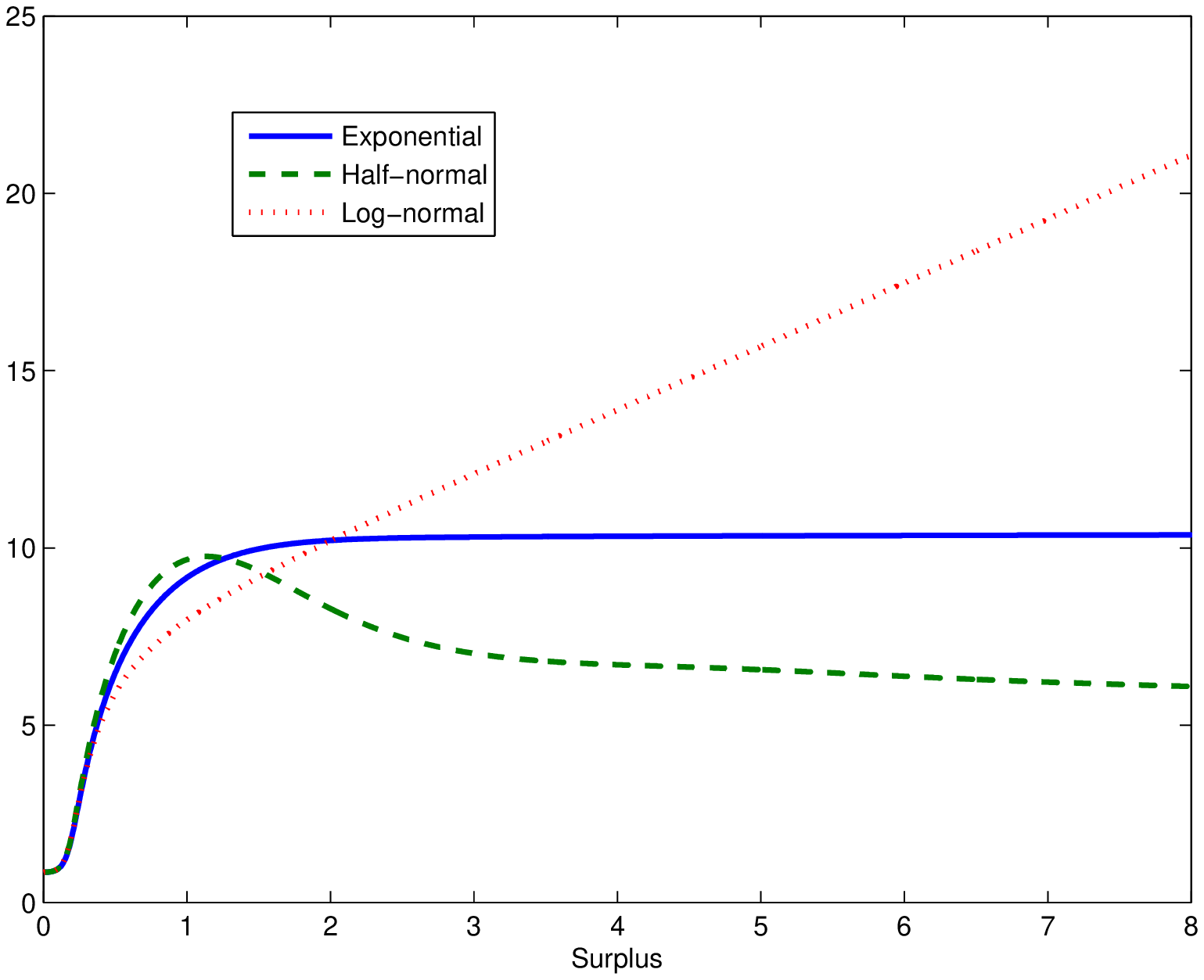}
               \caption{Optimal investment at large surplus - Example 1}
                \label{eg1-infty}
\end{minipage}
\end{figure}

\begin{figure}[ht]
\begin{minipage}[b]{0.8\linewidth}
 \includegraphics[width=1\textwidth]{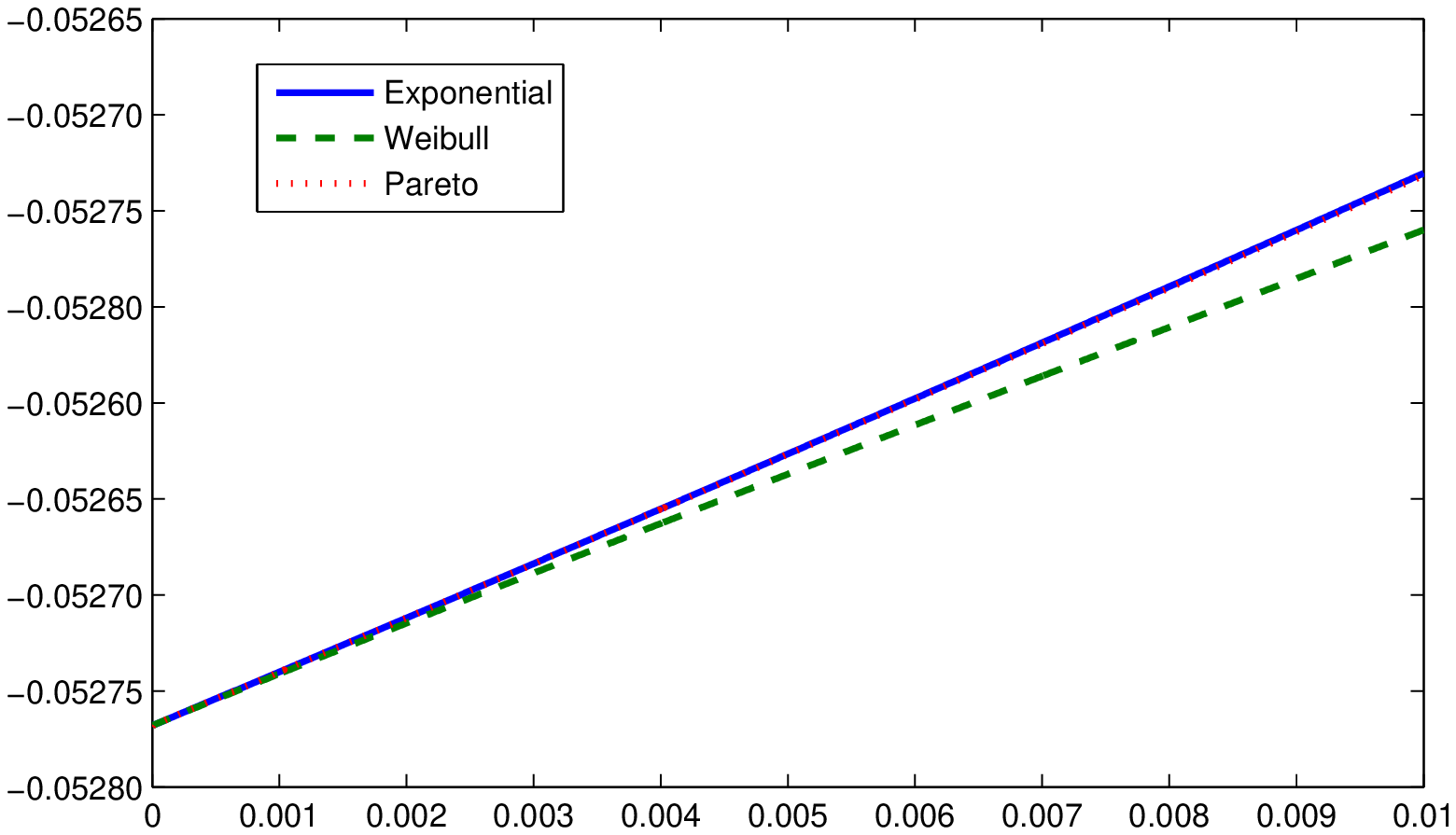}
                \caption{Optimal investment at low surplus - Example 2}
                \label{eg2-0}
\end{minipage}
\begin{minipage}[b]{0.8\linewidth}
 \includegraphics[width=1\textwidth]{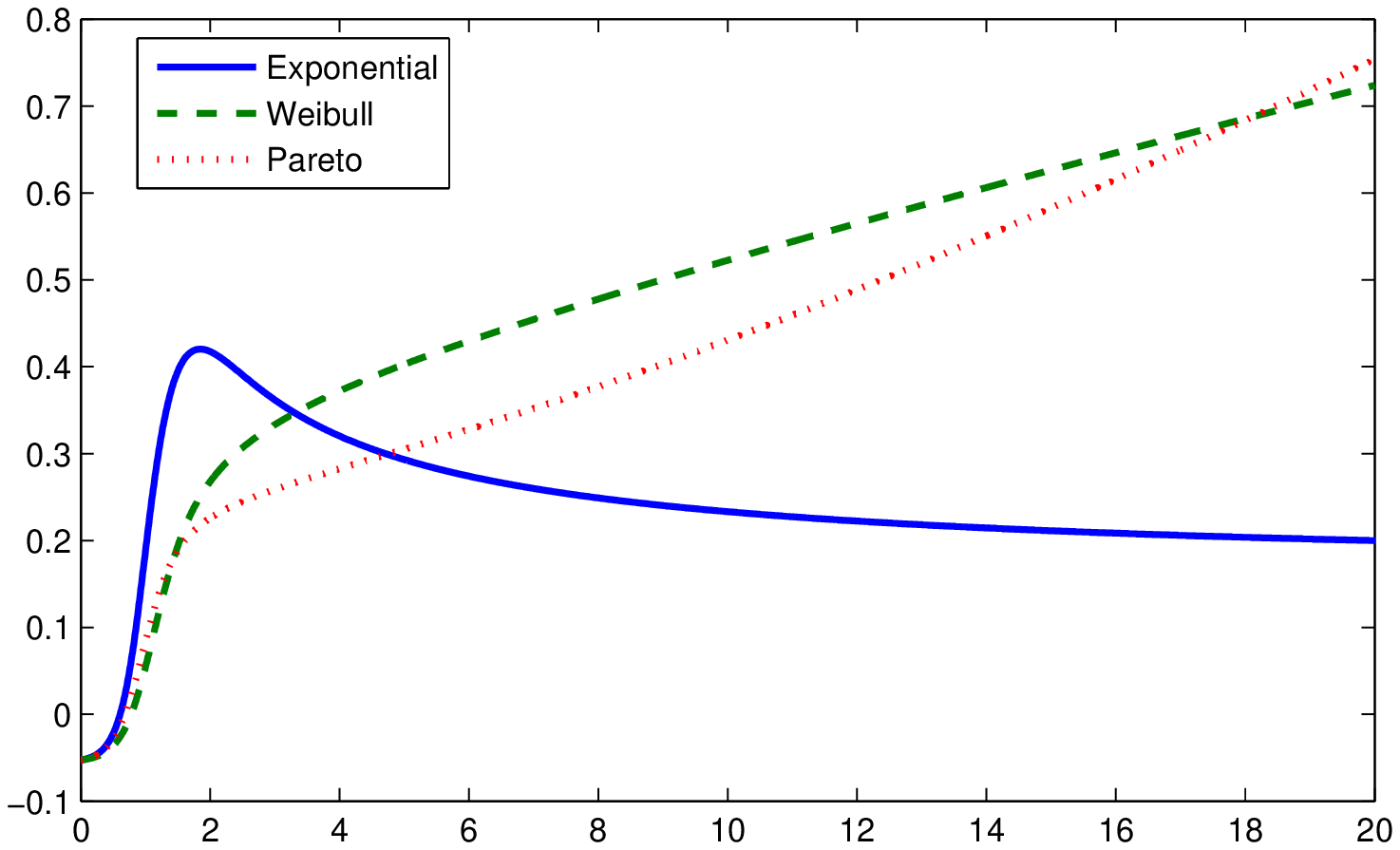}
               \caption{Optimal investment at large surplus - Example 2}
               \label{eg2-infty}
\end{minipage}
\end{figure}

In this paper, we study the
optimal investment control problem
under the scenario of ruin minimization.
The surplus is modeled by a perturbed Cramer-Lundberg process.
Investment control with a Black-Scholes stock
and a risk-free asset is considered.
We prove the existence of a classical solution to the HJB equation
in both cases of investment using operators.
In the constrained investment case, for low surplus levels,
we find parameter conditions under which
the optimal investment amount takes values
 $0$ (no investment in the risky asset)
or $A$ (maximal level of risky investment), or it
tends to a fixed level.
In the unconstrained case, we show that
 the optimal investment amount approaches to a fixed level
at a rate of order $x$ as the surplus level $x$ goes to $0$.
We also show that the maximal survival probability tends
to $0$ at a rate of order $x$.
In the case with exponential claims,
 we give new asymptotic results for large surplus values.
We prove that the optimal investment amount tends to a fixed level
at a rate of $1/x$ as the surplus level $x$ tends to infinity.
We also prove that the minimal ruin probability function
has a limit expression of $e^{ - x/m} x^{{\lambda }/{r} - 1}$.

In general, the optimal investment control and the maximal survival probability
function are not analytically tractable under the jump-diffusion model, i.e.,
it is usually unable to give explicit expressions for them.
Thus the asymptotic results in this paper
provide convenient and insightful calculations
when finding the optimal investment control
and the maximal survival probability.

\vspace{0.5 cm}
\centerline{\textbf{Acknowledgements} }
The first author of this research
was supported by the Russian Fund of Basic Research,
grants RFBR 13-01-00784 and RFBR 11-01-00219, and  the
 International Laboratory of Quantitative Finance, NRU HSE, RF government grant, ag. 14.A12.31.0007 (TB).
This paper was also written partially during a visit (TB) to the
Hausdorff Research Institute for Mathematics at the University of
Bonn in the framework of the Trimester Program ``Stochastic
Dynamics in Economics and Finance" (TB). The second author (SL) acknowledges the
support of a Professional Development Assignment
and a Summer Research Fellowship
from the University of Northern Iowa.


{\bf Appendix}

We give a sketched proof for the asymptotic series representation
of $\widetilde{a}_V$ in \eqref{series}.

Note that the function $\tilde{a}_V(x)$ is a solution to equation \eqref{eq_tilde_a}, which is
a nonlinear ODE copied below (in terms of $\phi$):
\begin{equation}
 \label{eq_phi}
\begin{split}
[\sigma ^2 \phi^2(x)+\sigma _\rho^2]\phi '(x)
=&-\frac{\sigma ^2}{m} \phi^3(x)-2\left[r-\lambda+ \frac{c_\rho}{m} -\frac{{(\mu-r)^2 }}{2\sigma ^2} + \frac{r}{m}x \right]
\frac{\sigma ^2}{\mu-r}\phi^2(x)\\
&+2\left(c_{\rho} + rx + \frac{1}{2m}\sigma _\rho^2\right)\phi(x)
  - \sigma_\rho^2 \frac{{\mu  - r}}{{\sigma ^2 }}.
\end{split}
 \end{equation}

We see that the equation \eqref{eq_phi} is asymptotically autonomous by a change of variables $y = x^2/2$ and letting $y\to \infty$.
 This autonomous equation has two finite stationary points. One of these is a stable point equal to $\tilde a_0$ and the other is an unstable point.
As a result, a solution of \eqref{eq_phi} must have a finite limit
equal to the stable point or tend to
 (+ or -) infinity as $x\rightarrow \infty$
 (See \cite{Bellman}, \cite{LY} and \cite{Wasov}).

We characterize at first one finite limit solution to \eqref{eq_phi}
given by a series
 \begin{equation}
\label{series1} w(x)=\Sigma _{k=0}^{\infty}\widetilde{a}_k/x^k,
\end{equation}
where the coefficients $\widetilde  {a}_k$ are given by (using
\eqref{eq_phi}):
 \begin{equation}\label{coeff}
 \widetilde{a}_0={(\mu-r)m}/{\sigma ^2}>0, \quad  \;\; \
\widetilde{a}_1=-\left(1-{\lambda}/{r}\right){(\mu-r)m^2}/{\sigma
^2},... .
\end{equation}
Suppose $\widetilde{w}(x)$ is another finite-limit-solution to \eqref{eq_phi}.
Define $b(x)=\widetilde{w}(x)-w(x)$.
The function  $b(x)$ solves the ODE
\begin{equation}\begin{split}\label{eqb}
&\sigma ^2(b^2+2wb)w^\prime+[\sigma^2(b^2+2wb+w^2)+\sigma _\rho^2]b^\prime\\
 =&-\frac{\sigma^2}{m}\left(b^3+3b^2w+3bw^2\right)-2(A+Bx)(b^2+2bw)+2(C+rx)b,
\end{split}
\end{equation}
where
$$A=\left [r-\lambda +\frac{c_\rho}{m}-\frac{{(\mu-r)^2 }}{2\sigma
^2}\right ] \frac{\sigma ^2}{\mu-r},\quad B= \frac{r\sigma
^2}{m(\mu-r)}, \quad C=c_{\rho}  + \frac{\sigma _\rho^2}{2m}. $$
Further let us linearize the ODE \eqref{eqb} on
$b(x)$ with $b(x)\to 0$, $x\to\infty$.  Taking
into account the principal linear terms of the expansion in powers
of $1/x$, we obtain
\begin{equation}\label{beq}
\widetilde{b}^\prime=x\left (d_0+\frac{d_1}{x}+\frac{d_2}{x^2}\right)\widetilde{b}, \quad
x\gg 1.
\end{equation}
Here $d_0=-2r/[\sigma _\rho^2 + (\mu-r)^2 m^2/\sigma ^2]<0$, and
we omit  expressions for the coefficients $d_1$ and $d_2$.
A general solution of the ODE \eqref{beq}
has the form $\widetilde{b}(x,D)=Dx^{d_2}\exp{(d_0 x^2/2
+ d_1 x)}$, where $D$ is an arbitrary constant. Then the
nonlinear ODE \eqref{eqb} has a one-parameter family of
solutions which can be represented by the parametric Lyapunov
series in terms of the integer powers of $\widetilde{b}(x,D)$.
Thus any finite-limit solution to \eqref{eq_phi}
has the following asymptotic representation:
\begin{equation}\label{aD}
\widetilde{w}(x,D)\thicksim\Sigma
_{k=0}^{\infty}\widetilde{a}_kx^{-k} + Dx^{d_2}\exp{(d_0x^2/2+d_1x)} (1+o(1)),
\,\, x\to\infty,
\end{equation}
where $d_0<0$ (see above), $\widetilde{a}_0>0$ and
$\widetilde{a}_1$ are defined in \eqref{coeff}.

Next we prove by contradiction
that $\tilde{a}_V$ must be a finite-limit solution of \eqref{eq_phi}.
Notice that $\tilde{a}_V>0$ when $\mu-r>0$ and $\tilde{a}_V<0$ when $\mu-r<0$.
For the case $\mu-r>0$, if we assume
$\underset{x\rightarrow\infty}{\lim}\tilde{a}_V(x)=\infty$,
dividing both sides of \eqref{eq_tilde_a} by
$[\sigma ^2 \tilde{a}_V^2(x)+\sigma _\rho^2]$
and letting $x\rightarrow\infty$, we then have
 $$\underset{x\rightarrow\infty}{\lim}\tilde{a}_V'(x)
=\underset{x\rightarrow\infty}{\lim}\left\{
-\frac{1}{m}\tilde{a}_V-2\left[r-\lambda+ \frac{c_\rho}{m} -\frac{{(\mu-r)^2 }}{2\sigma ^2} \right]
\frac{1}{\mu-r}
-2x\frac{\frac{r\sigma^2}{m(\mu-r)}\tilde{a}_V^2-r\tilde{a}_V}{\sigma ^2 \tilde{a}_V^2(x)+\sigma _\rho^2}\right\}
=-\infty,$$
wherefrom it holds $\underset{x\rightarrow\infty}{\lim}\tilde{a}_V(x)=-\infty$,
which contradicts to the assumption $\underset{x\rightarrow\infty}{\lim}\tilde{a}_V(x)=\infty$!
For the case $\mu-r<0$, if we assume 
$\underset{x\rightarrow\infty}{\lim}\tilde{a}_V(x)=-\infty$,
then similarly from \eqref{eq_tilde_a} we have 
 $$\underset{x\rightarrow\infty}{\lim}\tilde{a}_V'(x)
=\infty,$$
which implies  $\underset{x\rightarrow\infty}{\lim}\tilde{a}_V(x)=\infty$, 
leading to contradiction! 
So we conclude that
 $\tilde{a}_V$ is a finite-limit solution of \eqref{eq_phi} which has
the form of \eqref{aD}. We then obtain \eqref{series}.

\begin{thebibliography}{99999999}

\addcontentsline{toc}{chapter}{reference}
\bibitem{AM} Azcue, P. and Muler, M.:
    Optimal Investment strategy to minimize the ruin probability of an insurance company under borrowing constraints,
        {\it Insurance Math. Econom.} {\bf 44}(1), 26--34 (2009)

\bibitem{BHLT} Belkina, T., Hipp, C., Luo, S., and Taksar, M.,
    Optimal constrained investment in the Cramer-Lundburg model,
    {\it Scandinavian Actuarial Journal}, (5), 383--404 (2014)

\bibitem{Norshteyn} Belkina, T.A. and Norshteyn, M.V.: Structure of  optimal investment strategy in a dynamic
         model for risks with diffusion disturbances, \textit{Analysis and
        Modeling of Economic Processes, The Collection of
        Articles, ed. V.Z.Belenky}, \textbf{9}, Moscow, CEMI RAS
        (2012) (in Russian); e-print: www.cemi.rssi.ru/publication/books

\bibitem{Bellman} Bellman, R.: Stability theory of differential equations,
          \textit{New York: Dover}  (2008)

\bibitem{DG} Dufresne, F. and Gerber, H.U.:
Risk theory for the compound Poisson process that is perturbed by diffusion,
{\it Insurance Math. Econom.} {\bf 10} (1), 51--59 (1991)

\bibitem{E} Eisenberg, J.: Asymptotic optimal investment under interest rate for a class of subexponential distributions,
    {\it Scandinavian Actuarial Journal} (8), 671--689 (2014)

    \bibitem{kony_FKP}  Frolova A.,  Kabanov Yu. and Pergamenshchikov S.: In the
insurance business risky investments are dangerous, \textit{
Finance Stochast.}, \textbf{6}(2)  227-235 (2002)

\bibitem{GaierG} Gaier, J. and Grandits, P.: Ruin probabilities in the presence of regularly varying tails and
optimal investment, {\it Insurance Math. Econom.} {\bf 30}, 211--217 (2002)

\bibitem{GaierGS} Gaier, J., Grandits, P. and Schachermayer, W.: Asymptotic ruin probabilities and optimal investment,
 {\it The Annals of Applied Probability} {\bf 13} (3), 1054--1076 (2003)

\bibitem{GaierGr} Gaier, J. and Grandits, P.: Ruin probabilities and investment under interest force
 in the presence of regularly varying tails,  {\it Scandinavian Actuarial Journal} (4), 256--278 (2004)

 \bibitem{Grandits} Grandits, P.: An analogue of the Cram\'er-Lundberg approximation in the optimal investment case,
 {\it Appl. Math. Optim.} {\bf 50} (1), 1--20 (2004)

 \bibitem{Grandits_r} Grandits, P.: Minimal ruin probabilities and investment under interest force
 for a class of subexponential distributions,  {\it Scandinavian Actuarial Journal} (6), 401--416 (2005)


\bibitem{GY} Gerber H.U. and Yang H.: Absolute ruin probabilities in a
jump diffusion risk model with investment,
{\it North American Actuarial Journal} {\bf 11} (3), 159--169 (2007)

\bibitem{HP1} Hipp, C. and Plum, M.: Optimal investment for investors with state dependent income, and for insurers,
{\it Finance Stochast.} {\bf 7} (3),  299-�321 (2003)

\bibitem{HP} Hipp, C. and Plum, M.:
    Optimal investment for insurers,
        {\it Insurance Math. Econom.} {\bf 27}, 215--228 (2000)


\bibitem{H}  Hipp, C. and Schmidli, H.: Asymptotics of ruin probabilities for controlled risk processes in the small claims case,
    {\it Scandinavian Actuarial Journal}, (5), 321--335 (2004)


\bibitem{KonNB} Konyukhova, N.B.:  Singular Cauchy problems
        for systems of ordinary differential equations, {\it U.S.S.R.
        Comput. Maths. Math. Phys.}, {\bf 23},  72--82 (1983)

\bibitem{LL} Laubis, L. and Lin, J.E.: Optimal Investment Allocation in a Jump Diffusion Risk Model with Investment: A Numerical Analysis of Several Examples,  {\it Proceedings (electronic) of 43rd Actuarial Research Conference}, (2008)

\bibitem{LIN} Lin, X.:  Ruin theory for classical risk process that is perturbed by diffusion with risky investments,
{\it Applied Stochastic Models in Business and Industry},
 {\bf 25} (1),  33--44 (2009)

\bibitem{LUO} Luo, S.:
    Ruin minimization for insurers with borrowing constraints,
     {\it North American Actuarial Journal}, {\bf 12} (2), 143 -- 174 (2008)

\bibitem{LT} Luo, S. and Taksar, M.:
    On absolute ruin minimization under a diffusion approximation model,
     {\it Insurance: Mathematics and Economics}  {\bf 48} (1), 123--133 (2011)

\bibitem{LY} Lyapunov A.M.:
 The general  problem of the stability of motion,
{\it Int. J. Control} {\bf 55} (3), 521--527 (1992)

\bibitem{PG} Paulsen, J. and Gjessing, H.K.:
 Ruin theory with stochastic return on investments,
{\it Adv. Appl. Probab.} {\bf 29} (4),  965--985 (1997)

\bibitem{Paulsen} Paulsen, J.: On Cram\'er-like asymptotics for risk processes with stochastic return on investments,
        {\it The Annals of Applied Probability} {\bf 12} (4), 1247--1260 (2002)

\bibitem{S} Schmidli, H.:
    Optimal proportional reinsurance policies in a dynamic setting,
        {\it Scan. Actuarial J.} {\bf 1}, 55--68 (2001)

\bibitem{S1} Schmidli, H.:
    On minimizing the ruin probability by investment and reinsurance,
        {\it The Annals of Applied Probability} {\bf 12} (3), 890--907 (2002)

\bibitem{S11} Schmidli, H.:
    On optimal investment and subexponential claims,
        {\it Insurance: Mathematics and Economics} {\bf 36} (1), 25--35 (2005)

\bibitem{S2} Schmidli, H.: Stochastic control in insurance,
          \textit{London: Springer-Verlag}  (2008)

\bibitem{SG} Segerdahl, C.O.: \"{U}ber einige risikotheoretische Fragestellungen,
          \textit{Skand. Akt. Tidskr} {\bf 25}, 43--83 (1942)

\bibitem{TM} Taksar, M. and Markussen C.:
    Optimal dynamic reinsurance policies for large insurance portfolios,
        {\it Finance and Stochastics} {\bf 7}, 97--121 (2003)

\bibitem{T} Teschl, G.: Ordinary differential equations and dynamical systems, {\it Providence: American Mathematical Society} (2012)


 \bibitem{Wasov} Wasow, W., Asymptotic expansions for ordinary differential
equations, \textit{New York: Dover}  (1987)

\bibitem{YZ0} Yang, H.L. and Zhang, L.H.: Optimal investment for insurer with jump-diffusion risk process,
        {\it Insurance Mathematics Economics}, {\bf 37}, 615--634 (2005)
\end{thebibliography}
\end{document}